%% file: Journal.tex
\documentclass[journal]{IEEEtran}
\input{preamble}

\usepackage{orcidlink}
\usepackage{hyperref}
\hypersetup{
    colorlinks=true,
    linkcolor=blue,
    citecolor=blue,
    urlcolor=black
}

\title{\centering {Temperature-aware Optimization of Liquid Crystal Reconfigurable Intelligent Surfaces:\\ Physics-based Modeling and Robust Design}}
\author{
\IEEEauthorblockN{Mohamadreza Delbari$^{\orcidlink{0000-0002-4768-5874}}$, Bowu Wang$^{\orcidlink{0009-0009-6037-0015}}$, 
Arash Asadi$^{\orcidlink{0000-0001-9946-4793}}$, and Vahid Jamali$^{\orcidlink{0000-0003-3920-7415}}$}
\IEEEauthorblockA{
\thanks{The work of M. Delbari, B. Wang, and V. Jamali was supported in part by the Deutsche Forschungsgemeinschaft (DFG, German Research Foundation) within the Collaborative Research Center Multi-Mechanisms Adaptation for the Future Internet (MAKI) (SFB 1053) under Project-ID 210487104; in part by the LOEWE initiative (Hesse, Germany) within the emergenCITY Centre under Grant LOEWE/1/12/519/03/05.001(0016)/72, and in part by the German Federal Ministry for Research, Technology and Space (BMFTR) under the program of ``Souverän. Digital. Vernetzt.'' joint project Open6GHub plus (Project-ID 16KIS2407). 
A. Asadi's work was in part supported by DFG HyRIS (455077022) and DFG mmCell (416765679). 
This paper was presented in part at the 2025 IEEE International Conference on Communications (ICC) [DOI: 10.1109/ICC52391.2025.11161472] in \cite{delbari2024temperature}. (\textit{Corresponding author: Mohamadreza Delbari.})\\
Mohamadreza Delbari, Bowu Wang, and Vahid Jamali are with the Resilient Communication Systems Laboratory, Technische Universität Darmstadt, 64283 Darmstadt, Germany (e-mail: mohamadreza.delbari@tu-darmstadt.de;
bowuwang98@gmail.com;
vahid.jamali@tu-darmstadt.de).
Arash Asadi is with the Embedded Systems Group, Delft University of Technology, 2628 CD Delft, Netherlands (e-mail: a.asadi@tudelft.nl).}}
}
\begin{document}
\maketitle

\begin{abstract}
While \gls{LC} technology facilitates the realization of energy-efficient and scalable \glspl{RIS}, their phase shift response is inherently temperature-dependent. Neglecting this thermal dependency can lead to performance degradation, which is particularly detrimental in secure wireless systems where phase-shift inaccuracies may result in unintended information leakage. To address this challenge, we investigate secure communication in \gls{LC}-\gls{RIS}-aided systems and develop a temperature-adaptive phase-shift design. Beyond thermal sensitivity, the massive number of elements at \gls{mmWave} frequencies is required to compensate for high path loss. This large-scale deployment of \gls{LC}-\glspl{RIS} can lead to significant overhead challenges due to the acquisition of \gls{CSI}. To ensure practical feasibility, this work proposes a phase-shift design that does not rely on the full \gls{CSI}; instead, it employs only the possible locations of legitimate users and potential eavesdroppers. By illuminating a spatial zone rather than a single target location, the proposed temperature-adaptive algorithm enhances robustness against both thermally induced phase errors and positioning inaccuracies. To solve the resulting optimization problem, we present a \gls{SDP}-based approach to serve as a high-performance benchmark, as well as a low-complexity heuristic method. The latter demonstrates superior scalability as the number of \gls{RIS} elements increases, which makes it highly effective for deploying extremely large surfaces in dynamic, real-time environments. Based on this scalable framework, we further design a temperature-robust algorithm that maintains high security without requiring real-time temperature data. Extensive simulation results confirm that our temperature-adaptive and temperature-robust approaches yield a superior secrecy rate compared to conventional designs that neglect temperature impacts.

\begin{IEEEkeywords}
 Liquid crystal, reconfigurable intelligent surfaces, temperature, robust, and secure communication.
 \end{IEEEkeywords}
\end{abstract}
\glsresetall
\glslocalunset{RIS}

\section{Introduction}
\label{sec Introduction}
\IEEEPARstart{R}{econfigurable} intelligent surfaces (RISs) are a potential technology for the next generation of wireless communications, with the vision of realizing programmable radio environments \cite{Wu2019,di2019smart,najafi2020physics,delbari2026fast}. \Gls{LC} technology has recently been studied as a cost-effective and energy-efficient solution for \gls{RIS} implementation, particularly for \gls{mmWave} communication systems \cite{zografopoulos2019liquid,aboagye2022design}. \Gls{LC}s and \gls{LC}-\gls{RIS}s have been investigated in the literature from both experimental and theoretical perspectives  \cite{aboagye2022design,zografopoulos2019liquid,neuder2023compact,jimenez2023reconfigurable,wang2004correlations,Wang2005,delbari2024fast,delbari2026wideband,gholian2025temperature}. For example, Neuder \textit{et al.} \cite{neuder2023compact} demonstrated an experimental design of an \gls{LC}-\gls{RIS}, while Aboagye \textit{et al.} \cite{aboagye2022design} focused on its applications in visible light communication. Additionally, Jim{\'e}nez-S{\'a}ez \textit{et al.} \cite{jimenez2023reconfigurable} provided a comprehensive review of key characteristics of \gls{LC}-\gls{RIS}, including power consumption and cost, and compared these with other related technologies. Based on the works by Wang \textit{et al.} \cite{wang2004correlations,Wang2005}, where the equations for the response time of liquid crystals were derived, the authors in \cite{delbari2024fast} formulated an optimization problem to reduce the switching time response of \gls{LC}-\gls{RIS} systems. The wideband application of the \gls{LC}-\gls{RIS} is also analyzed in \cite{delbari2026wideband}.  These works highlight the growing focus on addressing the efficiency and practicality of \gls{LC} technologies in \gls{RIS}-assisted systems.

The phase-shift responses of \gls{LC}-based \gls{RIS}s are inherently temperature-dependent, as they rely on the mechanical reorientation of \gls{LC} molecules to produce different phase shifts. While this temperature dependency typically has only a negligible impact on the main lobe, it can substantially distort the side lobes, which causes significant performance degradation, particularly in the context of physical layer security, due to unintended information leakage. \Gls{RIS}-assisted physical layer security has attracted substantial research interest \cite{Shen2019,Asaad2022,Chu2021,Xiu2021,huang2025secrecy}. Initial efforts focused on techniques known as frequently incorporating artificial noise injection or reflective phase adjustments to impair eavesdropper links \cite{Chu2021}, where techniques like alternating optimization and fractional programming were deployed to maximize weighted secrecy sum-rates \cite{Asaad2022}. Subsequent literature expanded into hardware-constrained environments; for example, \gls{mmWave} frameworks operating with low-resolution digital-to-analog converters have utilized block coordinate descent and successive convex approximation to counter hardware impairments during phase shift optimization \cite{Xiu2021}. More recently, researchers have shifted attention toward practical channel uncertainties. To combat imperfect \gls{CSI} across multi-user and multi-eavesdropper topologies, robust frameworks combining \gls{SDP} with iterative hybrid algorithms have been exploited to guarantee secure transmission rates under estimation errors \cite{huang2025secrecy}. Despite these prior works, to the best of the authors' knowledge, physical layer security in \gls{LC}-\gls{RIS}-assisted systems and the impact of temperature variations on its performance have not been studied in the existing literature.

This paper investigates the impact of temperature fluctuations on \gls{LC}-\gls{RIS} unit cells in secure communications that focuses on mitigating temperature-induced information leakage. Specifically, we optimize the \gls{LC}-\gls{RIS} phase shifts under joint location uncertainty for both the legitimate user and the eavesdropper. Because \gls{mmWave} systems suffer from severe path loss, a massive number of \gls{RIS} elements is required. This makes full \gls{CSI} acquisition impractical. To counter potential positioning errors and user mobility overhead, we design the \gls{LC}-\gls{RIS} phase shifts to cover a specific spatial area rather than a single point. This approach effectively minimizes pilot overhead from user movements while protecting against eavesdropper location uncertainty. Our main contributions are summarized below:

\begin{figure}
    \centering
    \includegraphics[width=0.5\textwidth]{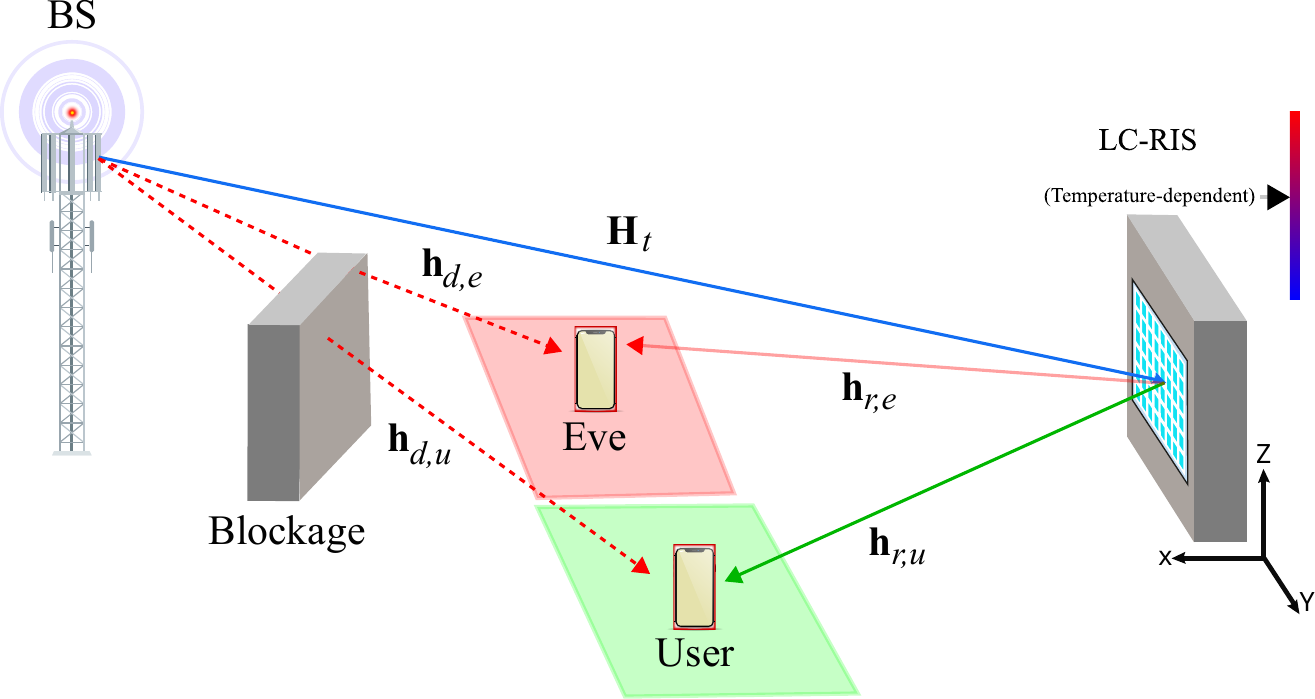}
    \caption{A wireless channel model where a \gls{BS} serves a legitimate user via an \gls{RIS} trying to decrease the received signal by an eavesdropper.}
    \label{fig:system model}
    \vspace{-5mm}
\end{figure}

\begin{itemize}
    \item \textbf{Physics-based Characterization of \gls{LC}-\gls{RIS} Phase-Shift Thermal Dynamics:} First, we introduce a physics-based model to quantify how temperature fluctuations impact the phase-shift profile of \gls{LC}-\gls{RIS} elements. Specifically, the model determines the phase shift at any arbitrary temperature relative to a known baseline phase shift at a standard temperature. We demonstrate that thermal variations alter the achievable differential phase-shift range $[0, \Delta\omega_{\max}]$, where $\Delta\omega_{\max}$ falls below $2\pi$ at temperatures higher than the standard baseline.
    \item \textbf{Temperature-aware Problem Formulation for \gls{LC}-\gls{RIS} Phase-shift Design:} Next, we introduce a physical layer security system model consisting of a legitimate \gls{MU} and a \gls{ME}. We formulate two non-convex optimization problems in Sections~\ref{sec: Temperature-Aware LC-RIS Phase-shift Design} and \ref{sec: Robust Design} to design the \gls{LC}-\gls{RIS} phase shifts, one that adapts to the temperature (temperature-adaptive design) and one that is robust to temperature variations (temperature-robust design). For both problems, we assume that the exact locations and instantaneous \gls{CSI} for \gls{MU} and \gls{ME} are unknown, and we rely solely on approximate user vicinity data. This formulation captures a practical challenge of securing \gls{RIS}-aided communications in real-world environments where users are mobile and location estimation errors are unavoidable. 
\item \textbf{Proposed Algorithms:} To handle the high non-convexity of the optimization problems, we employ algebraic reformulations and introduce an efficient, near-optimal design based on \gls{SDP}. While this \gls{SDP} approach delivers high performance, its computational overhead scales cubically with the \gls{RIS} dimension. Therefore, we additionally develop a low-complexity alternative which is used for both temperature-adaptive and temperature-robust designs. This alternative scheme reduces the computational complexity, which makes it highly scalable for massive \gls{RIS} deployments.
    \item \textbf{Performance Evaluation:} Finally, we validate the proposed algorithms through extensive simulations. First, we evaluate the convergence behavior and empirically analyze the computational complexity of both methods by comparing their execution times. Next, we illustrate the performance gap between the two algorithms. For the scalable approach, we demonstrate the achievable secrecy rate across different temperatures, and prove that neglecting thermal effects during the design phase degrades performance. Finally, we show that even without prior knowledge of the operating temperature, our proposed robust design delivers a consistently high secrecy rate across all temperature profiles.
\end{itemize}
This paper significantly extends its conference version \cite{delbari2024temperature} in several directions. First, unlike \cite{delbari2024temperature}, we rigorously derive a physics-based model for the temperature dependency of the \gls{LC}-\gls{RIS} phase shift response. Second, while \cite{delbari2024temperature} only derived an \gls{SDP} solution, here we derive a scalable solution for temperature-adaptive design, which is necessary for extremely large \glspl{RIS}. Third, we propose a temperature-robust design not covered in \cite{delbari2024temperature}. Finally, extensive simulation results are provided to evaluate the performance. The remainder of this paper is organized as follows. In Section~\ref{sec: System, LC, and Secrecy rate Model}, we present the system, channel, and secrecy model. In Section~\ref{sec: LC phase shifter}, we detail the \gls{LC} model with focusing on the temperature impacts. Sections~\ref{sec: Temperature-Aware LC-RIS Phase-shift Design} and \ref{sec: Robust Design} detail the proposed optimization problem for temperature-adaptive and -robust designs, respectively, followed by simulation results in Section~\ref{sec: Performance Comparison}. Finally, Section~\ref{sec: Conclusion} concludes the paper.

\textit{Notation:} Bold capital and small letters are used to denote matrices and vectors, respectively.  $(\cdot)^\Trans$, $(\cdot)^\Herm$, $\rank(\cdot)$, and $\tr(\cdot)$ denote the transpose, Hermitian, rank, and trace of a matrix, respectively. Moreover, $\diag(\bA)$ is a vector that contains the main diagonal entries of matrix $\bA$, $\bone_n$ and $\bzero_n$ denote column vectors of size $n$ whose elements are all ones and zeros, respectively. $\|\bA\|_*=\sum_i \sigma_i$, $\|\bA\|_2=\max_i \sigma_i$, $\|\bA\|_F$, and $\blambda_{\max}(\bA)$ denote the respectively nuclear, spectral, and Frobenius norms of a matrix $\bA$, and eigenvector associated with the maximum eigenvalue of matrix $\bA$, where $\sigma_i,\,\,\forall i$, are the singular values of $\bA$. Furthermore, $[\bA]_{m,n}$ and $[\ba]_{n}$ denote the element in the $m$th row and $n$th column of matrix $\bA$ and the $n$th entry of vector $\ba$, respectively. $x^+$ denotes as $\max\{x,0\}$ and $\arg(\cdot)$ returns phase of a complex number between 0 and $2\pi$. Moreover, $\Rset$ and $\Cset$ represent the sets of real and complex numbers, respectively, $\jj$ is the imaginary unit, and $\Ex\{\cdot\}$  represents expectation. $\mathrm{rand}(N)$ denotes an $N\times1$ vector where each element is generated independently and uniformly from 0 to 1. $\mathcal{CN}(\bmu,\bSigma)$ denotes a complex Gaussian random vector with mean vector $\bmu$ and covariance matrix $\bSigma$. Finally, $\bigO(\cdot)$ is the big-O notation and $|\Pset|$ is the cardinality of set $\Pset$.

\section{System, Channel, and Secrecy Rate Model}
\label{sec: System, LC, and Secrecy rate Model}
In this section, we begin by presenting the system model for both the legitimate user and mobile eavesdropper. Subsequently, we describe the channel model used in this paper. Finally, we introduce the secrecy rate considered in this paper.

\subsection{System Model}\label{system model}This work considers a downlink communication system operating over a frequency-non-selective (narrow-band) channel. The system comprises a \gls{BS} equipped with $N_t$ \gls{Tx} antennas, which serves a single-antenna legitimate \gls{MU} in the presence of a single-antenna \gls{ME}\footnote{To focus on the temperature impacts, we consider a single \gls{MU} and \gls{ME}. However, the algorithms can be used for multi-user scenarios, which will lead to a more involved optimization problem.}. The transmission is assisted by an \gls{LC}-based \gls{RIS} consisting of $N$ unit cells. The signals received at the legitimate \gls{MU} and the \gls{ME}, denoted by $y_u \in \Cset$ and $y_e \in \Cset$ respectively, in temperature $T$ are expressed as:
\begin{align}
\label{Eq:system model user}
y_g = \big(\bh_{d,g}^\Herm + \bh_{r,g}^\Herm \bGamma(T) \bH_t \big) \bx +n_g,\quad g=\{u,e\},
\end{align}
where $\bx \in \Cset^{N_t}$ represents the transmit signal vector, and $n_g \sim \sCN(0,\sigma_n^2)$ denotes the \gls{AWGN} with power $\sigma_n^2$. We employ linear beamforming such that $\bx = \bq s$, where $\bq \in \Cset^{N_t}$ is the \gls{BS} beamforming vector and $s$ is the data symbol with unit average power, i.e., $\Ex\{|s|^2\}=1$. The beamformer is subject to a maximum power constraint $\|\bq\|^2 \leq P_t$. The \gls{RIS} response is characterized by the diagonal reflection matrix $\bGamma(T) \in \Cset^{N \times N}$, where the $n$-th diagonal element is defined as $[\bGamma(T)]_n = [\bOmega]_n e^{\jj[\bomega(T)]_n}$. Here, $[\bomega(T)]_n$ and $[\bOmega]_n$ represent the phase shift and reflection amplitude of the $n$-th unit cell, respectively. In accordance with established characteristics of \gls{LC}-\glspl{RIS} in narrow-band operation \cite{yang2020design}, amplitude variations are negligible, allowing us to assume $[\bOmega]_n \approx 1$ for all $n$\footnote{We assume the \gls{LC} elements' loss is absorbed in the required \gls{BS} transmit power \cite{Wang2025}.}. The channel matrices for the \gls{BS}-\{MU, ME\}, \gls{BS}-\gls{RIS}, and \gls{RIS}-\{MU, ME\} links are denoted by $\bh_{d,g}$, $\bH_t$, and $\bh_{r,g}, g=\{u,e\},$ respectively. A rigorous characterization of these links follows in the next subsection.

\subsection{Channel Model}
\label{sec: Channel Model}
Given that \gls{mmWave} \gls{RIS} deployments are typically positioned at elevated altitudes to mitigate ground blockages, the propagation environment is predominantly characterized by \gls{LOS} components rather than \gls{nLOS} scattering for \gls{BS}-\gls{RIS} and \gls{RIS}-\gls{MU} channels. Furthermore, the substantial physical dimensions of \gls{LC}-\gls{RIS} arrays often place the \gls{BS} and \glspl{MU} within its radiating \gls{NF} region. To accurately capture these effects, we adopt the generalized \gls{NF} \gls{MIMO} Rician model proposed in \cite{delbari2025near}, which accounts for the spherical wave propagation inherent in \gls{NF} scenarios \cite{Liu2023nearfield,delbari2024nearfield}. A general \gls{MIMO} channel $\bH \in \Cset^{N_\rx \times N_\tx}$ between $N_\tx$ transmit and $N_\rx$ receive antennas is modeled as:
\begin{equation}\label{eq: channel model}\bH=c_0(\bH^{\mathrm{LOS}}+\sum_{r=1}^R\bar{k}_r\bar{\bH}_r+\tilde{k}_r\tilde{\bH}_r),
\end{equation}
where $\bH^\LOS$ represents the direct \gls{NF} \gls{LOS} component. The terms $\bar{\bH}_r$ and $\tilde{\bH}_r$ denote the deterministic and stochastic \gls{nLOS} components associated with the $r$-th reflector (e.g., environmental surfaces like walls or the ground). The channel is parameterized by the \gls{LOS} amplitude $c_0$, and the deterministic and stochastic Rician $K$-factors, $\bar{k}_r$ and $\tilde{k}_r$, respectively. In the \gls{NF} regime, the deterministic components are modeled based on the antenna coordinates to account for a spherical wavefront:
\begin{subequations}
\label{eq: channel models}
\begin{align}
	[\bH^\LOS]_{m,n} &= \, \e^{\jj\kk\|\bu_{\rx,m}-\bu_{\tx,n}\|}\label{Eq:LoSnear},\\
    [\bar{\bH}_r]_{m,n} &= \, \e^{\jj\kk\|\bu_{\rx,m}^r-\bu_{\tx,n}\|},\\
    [\tilde{\bH}_r]_{m,n} &\sim\sCN(0,1),
\end{align}
\end{subequations}
where $\bu_{\rx,m}$ and $\bu_{\tx,n}$ denote the spatial coordinates of the $m$-th receive and $n$-th transmit elements, respectively, and $\bu_{\rx,m}^r$ represents the image point relative to the $r$-th reflecting surface. The wavenumber is defined as $\kk=2\pi f/c$ where $f$ and $c$ are the frequency and the speed of light in vacuum, respectively. This unified model is applied to the channels $\bH_t$ and $\bh_{r,g},\,\forall g\in\{u,e\}$. To reflect practical constraints, we incorporate a barrier penetration model for the \gls{BS}-\glspl{MU} \gls{LOS} link to account for potential blockages \cite{Phillips2013}. For the purposes of algorithm design in Sections~\ref{sec: Temperature-Aware LC-RIS Phase-shift Design} and \ref{sec: Robust Design}, we assume the direct link is fully obstructed, i.e., $\bh_{d,g} \approx \bzero_{N_t},\,\forall g\in\{u,e\}$. However, the contribution of the direct link is considered in the numerical evaluations presented in Section~\ref{sec: Performance Comparison} to assess the system's performance under more general conditions.

\subsection{Secrecy Rate}
\label{sec: Secrecy rate}
To evaluate the effectiveness of physical-layer security, the secrecy rate is used as a primary performance metric, which quantifies the difference between the transmission rate accessible to legitimate \gls{MU} and the rate intercepted by an unauthorized \gls{ME}. The secrecy rate is defined as \cite{Cheng2023secure}:
\begin{subequations}
    \label{eq: SNR}
\begin{align}
\label{eq: SNR a}
    \RS(T)&=[\widetilde{\RS}(T)]^+,\\
\label{eq: SNR b}
    \widetilde{\RS}(T) &=\log(1\!+\!\SNR_u(T))\!-\!\log(1\!+\!\SNR_e(T)),
    \end{align}
\end{subequations}
where
\begin{subequations}
\begin{align}
    \SNR_u(T)&=|(\bh_u^\eff(T))^\Herm\bq|^2/\sigma^2_n,\\
    \SNR_e(T)&=|(\bh_e^\eff(T))^\Herm\bq|^2/\sigma^2_n.
\end{align}
\end{subequations}
Here, $(\bh_g^\eff(T))^\Herm=\bh_{d,g}^\Herm + \bh_{r,g}^\Herm \bGamma(T) \bH_t,\, \forall g \in \{u,e\}$ denotes the total effective channel and encompasses the combined communication link from the \gls{BS} to the legitimate \gls{MU} and the \gls{ME}, which accounts for both direct and reflected paths.

This paper focuses on maximizing the secrecy rate, $\RS(T)$, by accounting for temperature-dependent variations on \gls{LC} and jointly optimizing the \gls{BS} beamformer and \gls{RIS} phase shifts. This enhances the signal for the legitimate user while intentionally degrading the signal quality for potential eavesdroppers. Unlike traditional models that require precise \gls{CSI}, our approach utilizes the approximate location regions of both parties, which offers several practical benefits:

\begin{itemize}
    \item \textbf{Robust design \gls{w.r.t.} eavesdropper's channel knowledge:} Rather than assuming full (instantaneous or statistical) \gls{CSI} for the eavesdropper, which is a common but often difficult to realize \cite{Cumanan2014}, we only require that their location be confined to a general spatial region, $\bp_e \in \Pset_e$. The physical dimensions of this zone, $|\Pset_e|$, naturally account for estimation uncertainties.
    \item \textbf{Reduced overhead and increased user coverage:} Similarly, legitimate \glspl{MU} are assumed to be located within a target zone, $\bp_u \in \Pset_u$. Increasing the area $|\Pset_u|$ serves a dual purpose: it compensates for localization errors and minimizes control overhead by reducing the \gls{RIS} reconfigurations frequency, but at the cost of reducing receive power. Therefore, the \gls{RIS} phase shifts are optimized to provide consistent service across all potential user coordinates based on the given temperature knowledge.
\end{itemize}

In this paper, the design of the \gls{LC}-\gls{RIS} phase shifts is primarily guided by \gls{LOS} paths. In the \gls{mmWave} spectrum, which is the main operational frequency for \gls{LC}-based \gls{RIS}, the \gls{LOS} component is the dominant contributor to received signal strength. Consequently, focusing on these paths is essential for optimizing the system's secrecy performance. The impact of the \gls{nLOS} paths will be investigated in Section~\ref{sec: Performance Comparison}.

\section{LC Phase Shifter Model}
\label{sec: LC phase shifter}
In this section, we characterize the mechanism by which \gls{LC} molecules introduce a controllable phase shift into an impinging electromagnetic wave. We first introduce the baseline phase shifter model at a fixed reference temperature in Section~\ref{LC phase shifter model without temperature impact}. Subsequently, in Section~\ref{subsec: Theory behind the temperature impact on LC}, we develop a model that incorporates the thermodynamic behavior of the \gls{LC} to model the temperature-dependent phase-shift response.

\subsection{LC Phase Shifter Model at a Reference Temperature}
\label{LC phase shifter model without temperature impact}
\gls{LC}-\glspl{RIS} manipulate incoming signals by exploiting the anisotropic electromagnetic properties of \gls{LC} molecules, which can be dynamically reoriented via an applied external electric field ($\vec{E}_{\rm RF}$)~\cite{jimenez2023reconfigurable}. Because \gls{LC} molecules have an elongated, rod-like geometry, their local permittivity depends heavily on whether the electric field vector is aligned with their major or minor axis. Aligning the field with the major axis yields a higher relative permittivity, which consequently maximizes the phase delay induced by the individual \gls{RIS} element. Conversely, alignment with the minor axis yields a lower permittivity. By adjusting the biasing voltage ($V$) across the \gls{LC} layer, the orientation of the molecules can be continuously tuned, allowing for a programmable wireless environment.

The maximum phase tuning range $\Delta\omega_{\max}$ achievable by an \gls{LC} element is fundamentally restricted by the physical dimensions of the cell, the operating frequency, and the maximum material anisotropy, i.e.,
\begin{equation}
    \Delta\omega_{\max} = 2\pi l \left(\sqrt{\varepsilon_{r,\parallel}} - \sqrt{\varepsilon_{r,\perp}}\right) \frac{f}{c},
    \label{eq:omega epsilon}
\end{equation}
where $l$ is the physical length of the phase shifter cell, $f$ is the operating frequency, and $c$ is the speed of light in vacuum. The parameters $\varepsilon_{r,\parallel}$ and $\varepsilon_{r,\perp}$ denote the maximum and minimum relative permittivities, which correspond to configurations where the electric field is perfectly parallel or perpendicular to the molecular alignment vector $\bu_\text{molecule}$, respectively. Both $\varepsilon_{r,\parallel}$ and $\varepsilon_{r,\perp}$ scale continuously with temperature,  which leads to a
temperature-dependent phase-shift response for \gls{LC}-RISs, as will be discussed in detail in Section~\ref{subsec: Theory behind the temperature impact on LC}.

\subsection{Proposed Temperature-Dependent Phase Shifter Model}
\label{subsec: Theory behind the temperature impact on LC} 

\begin{figure}[tbp]
\centering
\begin{subfigure}{0.15\textwidth}
    \centering
    \includegraphics[width=1\textwidth]{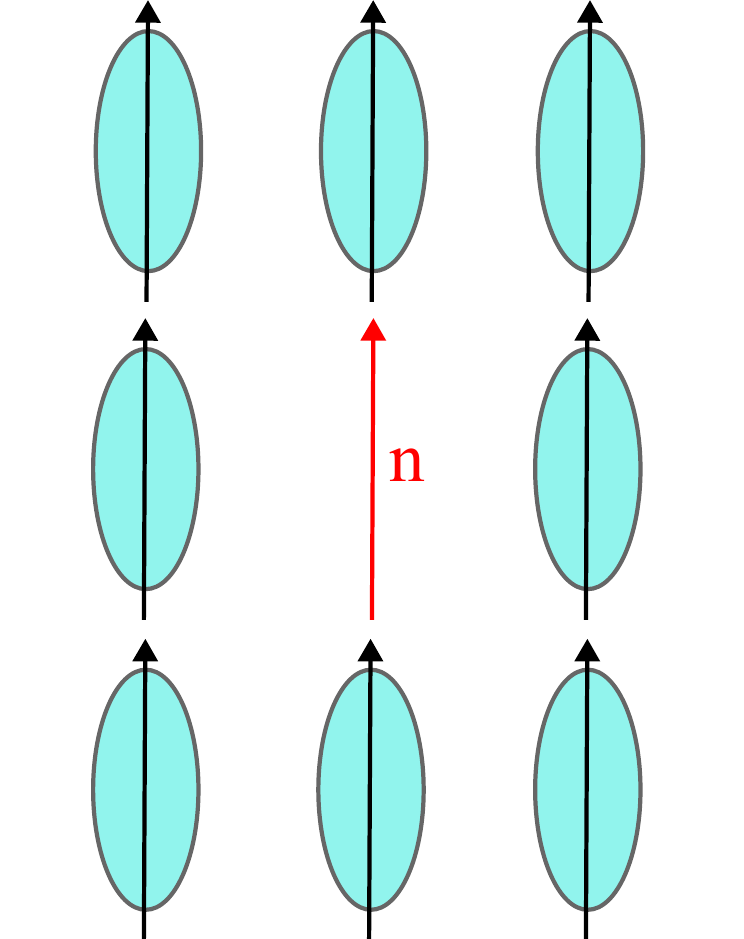}
    \caption{Low temperature}
    \label{fig: alignment temperature a}
\end{subfigure}
\hfill
\begin{subfigure}{0.15\textwidth}
    \centering
    \includegraphics[width=1\textwidth]{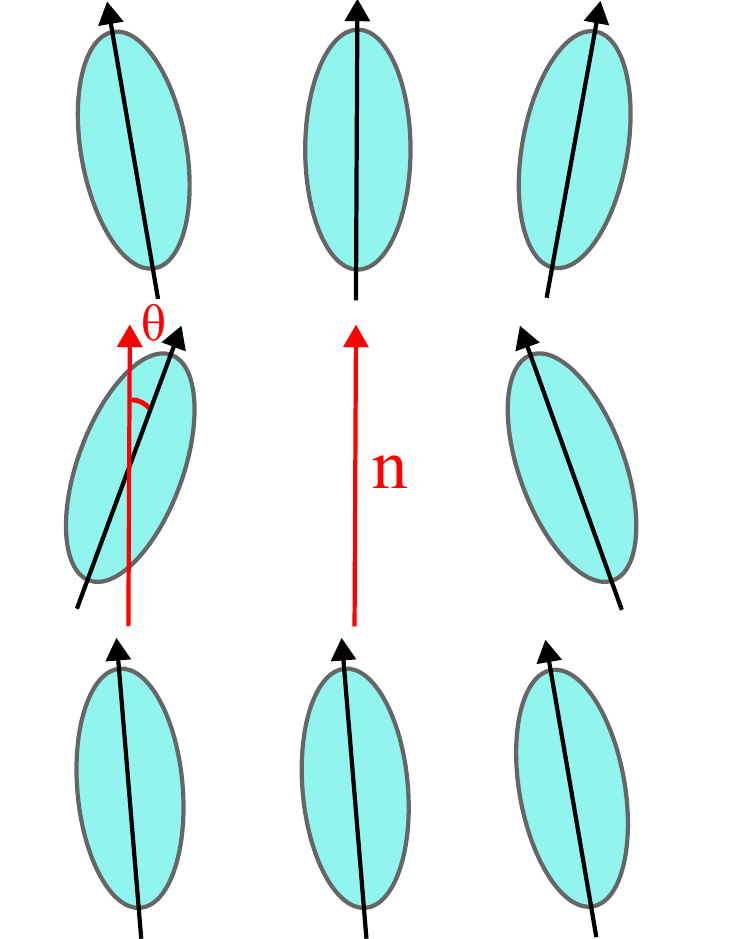}
    \caption{Moderate temperature}
    \label{fig: alignment temperature b}
\end{subfigure}
\hfill
\begin{subfigure}{0.15\textwidth}
    \centering
    \includegraphics[width=1\textwidth]{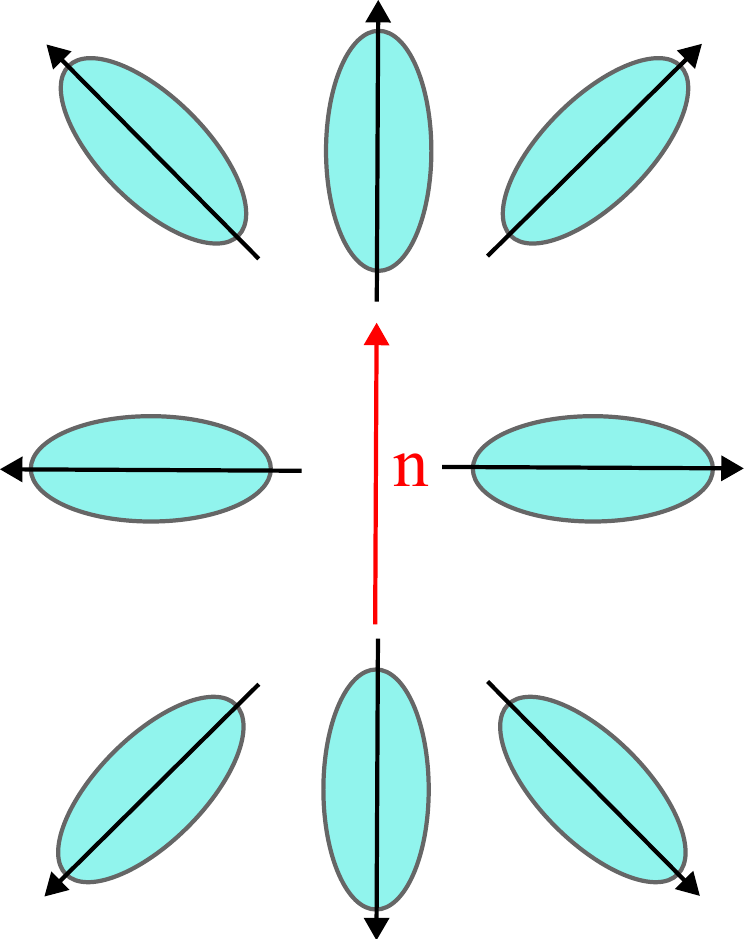}
    \caption{High temperature}
    \label{fig: alignment temperature c}
\end{subfigure}
\caption{Schematic representation of \gls{LC} molecular alignment profiles across different thermal states, where $\bn$ shows the desired direction while individual \gls{LC} molecules may deviate from $\bn$ with angle of $\theta$ due to the temperature.}
\label{fig: alignment temperature}
\vspace{-5mm}
\end{figure}
The \gls{LC} molecules maintain a long-range orientational order along a common vector known as the director, denoted by the unit vector $\bn$ (see Fig.~\ref{fig: alignment temperature}) in the nematic phase. While an applied voltage controls this average molecular orientation to tune the macroscopic permittivity, thermal energy causes the molecules to fluctuate around this mean direction. As temperature increases, these thermal fluctuations disrupt the collective alignment, directly reducing the maximum differential phase shift achievable by the \gls{LC}-RIS in \eqref{eq:omega epsilon}. To quantify this thermal degradation, a scalar order parameter $S(T)$ is defined to capture the macroscopic state of the \gls{LC} fluid. By applying the linear birefringence approximation \cite{Wang2005}, the temperature-dependent refractive index variation can be modeled as:
\begin{equation}
\label{eq: linear define S}
    \sqrt{\varepsilon_{r,\parallel}} - \sqrt{\varepsilon_{r,\perp}} \approx \Delta n_0 S(T),
\end{equation}
where $\Delta n_0$ represents the structural birefringence extrapolated to absolute zero ($0\text{ K}$). By substituting \eqref{eq: linear define S} into \eqref{eq:omega epsilon}, the maximum phase tuning range scales according to:
\begin{subequations}
\label{eq: delta omega max temp}
\begin{align}
    &\Delta\omega_{\max}(T) = 2\pi l \frac{f}{c} \Delta n_0 S(T), \\
    &\Delta\omega_{\max}(T_r) = 2\pi l \frac{f}{c} \Delta n_0 S(T_r) = 2\pi,
\end{align}
\end{subequations}
where $T_r$ denotes a reference temperature at which a full $2\pi$ phase shift range must be achievable.

To evaluate the order parameter $S(T)$, let the orientation of a single \gls{LC} molecule in a 3D Spherical coordinate system be defined by its major axis unit vector $\bu_\text{molecule}$. Within the molecular ensemble, the molecule experiences a mean-field potential that aligns it toward the director $\bn$, which is determined by the applied control voltage. The polar angle between $\bu_\text{molecule}$ and $\bn$ is denoted by $\theta$. Assuming an electric field applied along the $z$-axis prompts the molecules to rotate within this reference frame, the orientation \gls{PDF} across a differential solid angle $\dd\Omega = \sin\theta \dd\theta \dd\phi$ is given by $f(\theta,\phi)$. The scalar order parameter $S$ is defined as the statistical expectation of a function $g(\theta)$ which incorporates the impact of the $\theta$ in the maximum differential phase shift achievable by the \gls{LC}-RIS in \eqref{eq:omega epsilon}~\cite{cheung2002structures}:
\begin{equation}
    \label{eq: S definition}
    S \triangleq \mathbb{E}\{g(\theta)\} = \int_0^{2\pi} \int_0^{\pi} g(\theta) f(\theta,\phi) \sin\theta \, \dd\theta \, \dd\phi.
\end{equation}
To properly characterize nematic order, $g(\theta)$ must satisfy two boundary constraints:
\begin{itemize}
    \item \textbf{Normalization:} For perfect crystalline alignment parallel to the director (Fig.~\ref{fig: alignment temperature a}), $f(\theta,\phi)\sin\theta \to \delta(\theta)$, which must yield $S = 1$. Conversely, for a completely disordered isotropic liquid (Fig.~\ref{fig: alignment temperature c}), $f(\theta,\phi)$ becomes a uniform distribution over the sphere, which must yield $S = 0$.
    \item \textbf{Head-to-Tail Symmetry:} Nematic molecules exhibit inversion symmetry, meaning a molecular orientation at angle $\theta$ is physically indistinguishable from one at $\pi - \theta$.
\end{itemize}
The simplest function satisfying both conditions is the second-order Legendre polynomial~\cite{krzyzanowski2026thermotropic}:
\begin{equation}
    \label{second Legendre polynomial}
    g(\theta) = P_2(\cos\theta) = \frac{1}{2}\left(3\cos^2\theta - 1\right).
\end{equation}
According to statistical mechanics, the structural orientation $f(\theta,\phi)$ is governed by the Boltzmann distribution~\cite{cercignani1988boltzmann}:
\begin{equation}
\label{eq: PDF distribution on T}
    f(\theta,\phi) = \frac{1}{Z} \exp\left(-\frac{U(\theta)}{k_B T}\right),
\end{equation}
where $k_B$ is the Boltzmann constant, $T$ is the absolute temperature, and $Z = \int_{0}^{2\pi}\int_{0}^{\pi} \exp\left(-\frac{U(\theta)}{k_B T}\right)\sin\theta \dd\theta \dd\phi$ is the partition function. Here, $U(\theta)$ represents the mean-field potential energy exerted by neighboring molecules. Under the Maier–Saupe mean-field approximation~\cite{cotter1977consistency}, this potential is directly proportional to the macroscopic order parameter $S$:
\begin{equation}
    \label{eq: energy}
    U(\theta) = -\alpha_i S P_2(\cos\theta),
\end{equation}
where $\alpha_i$ is a material constant representing the intermolecular interaction strength. The negative sign confirms that the system's energy is minimized when the molecules align perfectly with the director ($\theta = 0$). 

Because the microscopic distribution $f(\theta,\phi)$ depends on $U(\theta)$, which in turn depends back on $S$, the system forms a self-consistent field loop. Substituting \eqref{eq: energy} into \eqref{eq: PDF distribution on T} and evaluating the expectation in \eqref{eq: S definition} yields:
\begin{equation}
    \label{eq: S definition 2}
    S = \frac{\int_0^{\pi} P_2(\cos\theta) \exp\left(\frac{\alpha_i S P_2(\cos\theta)}{k_B T}\right) \sin\theta \, \dd\theta}{\int_0^{\pi} \exp\left(\frac{\alpha_i S P_2(\cos\theta)}{k_B T}\right) \sin\theta \, \dd\theta}.
\end{equation}
Equation \eqref{eq: S definition 2} lacks a closed-form analytical solution and must be evaluated numerically for any arbitrary temperature $T$. To
avoid this complexity, Haller proposed an empirical power-law relationship to approximate $S(T)$~\cite{haller1975thermodynamic}:
\begin{equation}
\label{eq: Haller}
    S(T) = \left(1 - \frac{T}{T_c}\right)^\beta,
\end{equation}
where $\beta$ is a material-dependent constant typically ranging between $0.2$ and $0.25$ for nematic mixtures~\cite{Wang2005}, and $T_c$ denotes the clearing temperature threshold at which the material transitions into a fully isotropic liquid ($S=0$).

\begin{figure}[tbp]
    \centering
    \includegraphics[width=0.5\textwidth]{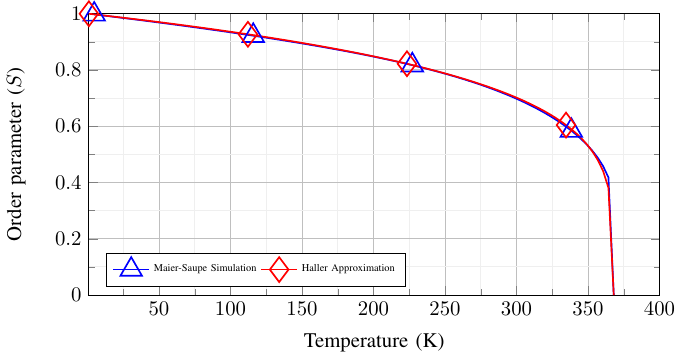}
    \caption{Validation of the nematic order parameter $S(T)$ as a function of temperature. The plot illustrates the numerical solution of the self-consistent Maier-Saupe theory in \eqref{eq: S definition 2} alongside the empirical Haller approximation in \eqref{eq: Haller} for a clearing temperature of $T_c = 368\text{ K}$.}
    \label{fig: Haller}
\end{figure}

\begin{figure}[tbp]
	\centering
	\includegraphics[width=0.5\textwidth]{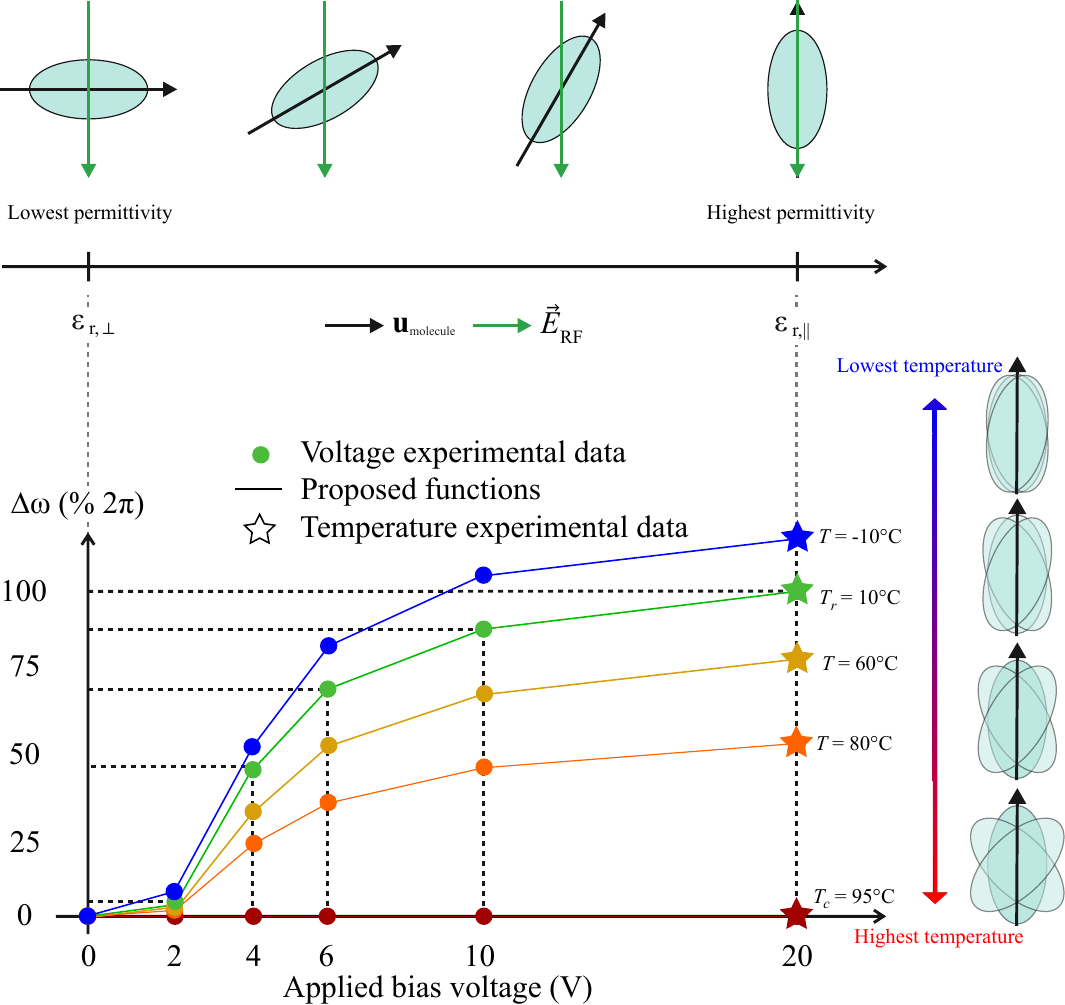}
	\caption{Phase shift versus applied voltage for an \gls{LC} element across different temperatures. The experimental baseline is adapted from \cite{neuder2024architecture,tesmer2021temperature}, with a piece-wise linear function modeled following the approach in \cite{delbari2024fast}.}
	\label{fig:V_phase}
    \vspace{-5mm}
\end{figure}

A comparison between the numerical evaluation of the Maier-Saupe equation \eqref{eq: S definition 2} and the empirical Haller approximation \eqref{eq: Haller} is illustrated in Fig.~\ref{fig: Haller}, demonstrating excellent agreement. By substituting \eqref{eq: Haller} into \eqref{eq: delta omega max temp}, the temperature-dependent maximum phase shift simplifies to
\begin{equation}
    \Delta\omega_{\max}(T) = 2\pi \left(\frac{T_c - T}{T_c - T_r}\right)^\beta.
\end{equation}
Assuming that the baseline minimum phase shift can be calibrated to zero ($\omega_{\min} = 0$) across all thermal states, the complete coupled voltage- and temperature-dependent phase-shift profile $\omega(V,T)$ is formulated as:
\begin{align}
\label{eq: omega in temperature final}
    &\omega(V,T) = \omega(V,T_r) \left(\frac{T_c - T}{T_c - T_r}\right)^\beta, \\
    \label{eq: omega max in temperature}
    &\omega_{\max}(T) \triangleq \omega(V_{\max},T) = 2\pi \left(\frac{T_c - T}{T_c - T_r}\right)^\beta,
\end{align}
where $\omega(V,T_r)$ is the phase shift at the reference temperature (green curve in Fig.~\ref{fig:V_phase})~\cite{neuder2024architecture}. The proposed voltage-to-phase relationship at different temperatures is illustrated in Fig.~\ref{fig:V_phase}.

\section{Temperature-adaptive LC-RIS Phase-shift Design}
\label{sec: Temperature-Aware LC-RIS Phase-shift Design}
In this section, we first formulate an optimization problem that maximizes the secrecy rate for the legitimate user under a given temperature condition. Subsequently, we solve it via an \gls{AO} over \gls{LC}-\gls{RIS} phase shifters and \gls{BS} beamforming.

\subsection{Problem Formulation}
\label{subsec: Problem Formulation}
In this section, we initiate the design of the \gls{RIS} phase shifts by formulating an optimization problem focused on maximizing the secure rate, given specific temperature data\footnote{This data can be provided by equipping thermal sensors on the \gls{LC}-\gls{RIS}.}. Based on the system architecture shown in Fig.~\ref{fig:system model}, the \gls{RIS} is configured to maximize the secure rate defined in \eqref{eq: SNR} for the \gls{MU} under a worst-case scenario. The primary objective is to ensure a maximum secure rate without requiring the precise positioning of the legitimate user or the eavesdropper, rather than providing only their defined spatial zones. Relying solely on the \gls{LOS} links, the problem is formulated as follows:
\begin{subequations}
\label{eq:optimization 1}
\begin{align}
    \text {P1:}\quad&~\underset{\bomega,\bq,\alpha}{\max}~\alpha    \label{eq:optimization 1 a}
    \\&~\text {s.t.} ~~\RS(T)\geq \alpha,\, \forall \bp_u\in\Pset_u,\,\forall\bp_e\in\Pset_e   \label{eq:optimization 1 b}
    \\&\quad\hphantom {\text {s.t.} } 0\leq [\bomega(T)]_n < \omega_\tmax(T), \forall n,  \label{eq:optimization 1 c}
    \\&\quad\hphantom {\text {s.t.} } \|\bq\|_2^2\leq P_t.  \label{eq:optimization 1 d}
\end{align}
\end{subequations}
Here, \eqref{eq:optimization 1 b} is the secure rate constraint for a specific temperature defined in \eqref{eq: SNR}. The constraint \eqref{eq:optimization 1 c} governs the \gls{LC}-\gls{RIS} phase shifts, which are temperature-dependent as detailed in Section \ref{subsec: Theory behind the temperature impact on LC}, and $\omega_\tmax(T)$ was defined in \eqref{eq: omega max in temperature}, while \eqref{eq:optimization 1 d} enforces the \gls{BS} transmit power limits. Here, we aim to maximize the variable $\alpha$ denoting the worst-case secure rate.

Problem P1 is inherently non-convex, primarily due to the mathematical structure of constraint \eqref{eq:optimization 1 b}. Furthermore, the coupling of the vector variables $\bq$ and $\bomega$ within this constraint complicates the search for a global optimum. To address these challenges, we decompose the problem into two sub-problems and utilize \gls{AO} to iteratively maximize the objective function at each stage.

\subsection{RIS Phase-shift Design}
\label{sec: RIS Phase-shift Design}
Assuming a fixed beamformer, we first maximize the secrecy rate by optimizing the \gls{RIS} phase shifts. The phase shift configuration subproblem is given by:
\begin{subequations}
\label{eq:optimization 3}
\begin{align}
    \text {P2:}\quad&~\underset{\bomega,\alpha}{\max}~\alpha
    \\\label{eq:optimization 3 b}
    &~\text {s.t.} ~~\RS(T)\geq \alpha,\, \forall \bp_u\in\Pset_u,\,\forall\bp_e\in\Pset_e
    \\
    &\quad\hphantom {\text {s.t.} } 0\leq [\bomega(T)]_n < \omega_\tmax(T), \forall n.
\end{align}
\end{subequations}
Problem P2 is inherently non-convex, primarily due to the non-convex nature of the phase-shift constraint \eqref{eq:optimization 3 b} \gls{w.r.t.} $[\bomega]_n,\,\forall n$. To facilitate a more tractable formulation without loss of generality, we maximize $\widetilde{\RS}(T)$ as defined in \eqref{eq: SNR b} rather than the original $\RS(T)$ in \eqref{eq: SNR a}. Note that omitting the $[\cdot]^+$ operator does not alter the optimization results; if the optimal solution yields $\widetilde{\RS}(T)>0$, the operator is redundant, whereas an $\widetilde{\RS}(T)<0$ simply indicates that the achievable secrecy rate is zero. By introducing an auxiliary variable $\gamma$ such that $\log(\gamma) \triangleq \alpha$, the constraint \eqref{eq:optimization 3 b} can be reformulated as:
\begin{equation}
\label{eq: gamma definition}
    \frac{1+\SNR_u(T)}{1+\SNR_e(T)}\geq\gamma\Rightarrow (\SNR_u(T)-\gamma\SNR_e(T))\geq\gamma-1.
\end{equation}
 Here, we can also decompose each $\SNR_u$ and $\SNR_e$ in terms of a vector including exponential of \gls{RIS} phase shifts
 \begin{equation}
     \label{eq: phase shift vector}
     \bs(T)\defeq[\e^{\jj[\bomega(T)]_1}, \cdots, \e^{\jj[\bomega(T)]_N}]^\Trans.
 \end{equation}
 With this assumption, we have:
\begin{subequations}
    \label{eq: SNR in term of s}
    \begin{align}
        \SNR_u(T)=&\bs^\Herm(T)\bA_u\bs(T),\\
        \SNR_e(T)=&\bs^\Herm(T)\bA_e\bs(T),
    \end{align}
\end{subequations}
where $\bA_g=\frac{\diag(\bh_{r,g}^\Herm)\bH_t\bq\bq^\Herm\bH_t^\Herm\diag(\bh_{r,g})}{\sigma_n^2},\,g=\{u,e\}$. 
To solve the formulated Problem
P2, we introduce two distinct approaches, each presenting unique trade-offs. The first is an \gls{SDP}-based method, delivering excellent accuracy at the expense of high computational complexity. This can be considered as an upper-bound achievable secrecy rate. The second is a low-complexity method; although its performance is marginally lower than that of the \gls{SDP}-based approach, it is highly scalable and particularly well-suited for extremely large \gls{RIS}s.

\subsubsection{SDP-based method}
\label{sec: Optimization-based method}
To tackle the non-convexity of Problem P2 in the first method, we transform the problem into an \gls{SDP}-based problem. Let us define $\bS(T)\defeq\bs(T)\bs^\Herm(T)$, and $\bA_{u,e}(\gamma)\defeq\bA_u(\bp_u)-\gamma\bA_e(\bp_e)$ where $\bA_{u,e}(\gamma)$ is a function of $\bp_u,\,\bp_e,$ and $\gamma$ but we dropped $\bp_u,\,\bp_e,$ for notational simplicity. In addition, we omit the explicit temperature dependence $(T)$ and denote the phase-shift matrix simply as $\bS$ in the subsequent derivations. After applying these reformulations in P2, and because the logarithm function is increasing monotonically, problem P2 can be changed
to P3 in the following:
\begin{align}
\label{eq:optimization 4}
    \text {P3:}&~\underset{\bS,\gamma}{\max}~\gamma
    \\&~\text{s.t.}~~\text{C1: }\tr(\bA_{u,e}(\gamma)\bS)\geq\!\gamma\!-\!1, \forall (\bp_u,\bp_e)\!\in\Pset_u\!\times\!\Pset_e,\nonumber
    \\&\quad\hphantom {\text {s.t.} } \text{C2: } 0\leq [\bomega(T)]_n < \omega_\tmax(T), \forall n,\nonumber
    \\&\quad\hphantom {\text {s.t.} }\text{C3: } \bS\succeq 0,\text{C4: } \rank(\bS)=1, \text{C5: }\diag(\bS)=\bone_N.\nonumber
\end{align}
Despite the previous transformations, Problem~P3 still remains non-convex due to the non-convex nature of constraints C2 and C4 \gls{w.r.t.} $\bS$, as well as the coupling between the auxiliary variable $\gamma$ and the phase-shift matrix $\bS$ in C1. In the following, we decouple these variables and resolve the non-convexities associated with each constraint.
\paragraph{Rank constraint C4} To address this issue, we adopt the exploited penalty method in \cite{Yu2020power,ghanem2022optimization,delbari2026far}.
The basic idea is to replace the rank constraint with the inequality $\|\bS\|_*-\|\bS\|_2 \leq 0$, which holds only if $\bS$ has rank smaller than or equal to one. While the new constraint is still non-convex, one can apply the first-order Taylor approximation to make it convex. Let $\bS^{(i)}$ denote the value of matrix $\bS$ in the $i$th iteration. Based on the first-order Taylor approximation:
\begin{equation}
\label{eq: taylor approximation}
    \|\bS\|_2\geq\|\bS^{(i)}\|_2+\tr\big(\blambda_{\max}(\bS^{(i)})\blambda_{\max}^\Herm(\bS^{(i)})(\bS-\bS^{(i)})\big),
\end{equation}
By exploiting the penalty method \cite{Yu2020robust} and applying \eqref{eq: taylor approximation} into the cost function of P3, we have
\begin{subequations}
\label{eq:optimization 5}
\begin{align}
    \text {P4:}\quad&~\underset{\bS,\gamma}{\max}~\gamma-\eta^{(i)}\Big(\|\bS\|_*-\|\bS^{(i)}\|_2-\tr\big(\blambda_{\max}(\bS^{(i)})\nonumber
    \\&\quad\quad\quad\times\blambda_{\max}^\Herm(\bS^{(i)})(\bS-\bS^{(i)})\big)\Big)
    \\&~\text {s.t.} ~~ \text{C1, C2, C3, C5}.
\end{align}
\end{subequations}
Here, $\eta^{(i)}$ is the penalty factor at iteration $i$, which increases gradually. By choosing a sufficiently large $\eta$, problems P3 and P4 become equivalent. We will tackle the non-convexity of C2 \gls{w.r.t.} $\bS$ in the following.
\paragraph{Constraint C2: $0\leq [\bomega(T)]_n < \omega_{\max}(T),\,\forall n$} Although the constraint C2 is linear in $[\bomega(T)]_n, \forall n$, it is highly non-convex in the new defined variable $\bS$. To address this issue, we extract features of $\bS$ that are informative about $\omega_{\max}(T)$ and can be used to enforce C2. Note that, based on \eqref{eq: omega max in temperature}, $\omega_\tmax$ exceeds $2\pi$ when $T < T_r$. This allows the \gls{LC}-\gls{RIS} phase shifts to map directly into the $0-2\pi$ range. Conversely, if \(T > T_r\), then $\omega_\tmax$ is less than \(2\pi\). For this latter case, we propose the following solution. We first present two lemmas, and based on them, we reformulate this constraint to another constraint in terms of $\bS$ satisfying C2. 
\begin{lem}
\label{lemma integral}
    Let us assume $[\bomega]_n$ is empirically distributed uniform in interval $[0,\omega_{\max}]$. For sufficiently large $N$, matrix $\bS$ that satisfies the constraints C2-C5, also satisfies:
    \begin{equation}
    \sum_{i=1}^N [\bS]_{n,i}=\frac{\e^{\jj[\bomega]_n}N(1-\e^{-\jj\omega_{\max}})}{{\jj\omega_{\max}}},\forall n.
    \end{equation}
\end{lem}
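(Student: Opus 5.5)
The plan is to use constraints C3--C5 to recover the rank-one structure of $\bS$, and then turn the row sum into a Riemann sum.

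\textbf{Step 1 (rank-one structure).} C3 and C4 give $\bS=\bv\bv^\Herm$ for some $\bv\in\Cset^N$. C5 gives $|[\bv]_n|=1$ for every $n$. Hence $[\bv]_n=\e^{\jj[\bomega]_n}$ up to a common global phase, which cancels in $\bv\bv^\Herm$. So $\bS=\bs\bs^\Herm$, and
\begin{equation*}
[\bS]_{n,i}=\e^{\jj([\bomega]_n-[\bomega]_i)}.
\end{equation*}
C2 ensures that every $[\bomega]_i$ lies in $[0,\omega_{\max})$.

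\textbf{Step 2 (factor the row sum).}
\begin{equation*}
\sum_{i=1}^N[\bS]_{n,i}=\e^{\jj[\bomega]_n}\sum_{i=1}^N\e^{-\jj[\bomega]_i}.
\end{equation*}
The claim therefore reduces to showing that $\frac{1}{N}\sum_{i}\e^{-\jj[\bomega]_i}$ equals $\frac{1-\e^{-\jj\omega_{\max}}}{\jj\omega_{\max}}$ for large $N$.

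\textbf{Step 3 (empirical average to integral).} By hypothesis, the empirical distribution of $\{[\bomega]_i\}_{i=1}^N$ is uniform on $[0,\omega_{\max}]$. The function $x\mapsto\e^{-\jj x}$ is bounded and continuous, so as $N\to\infty$
\begin{equation*}
\frac{1}{N}\sum_{i=1}^N\e^{-\jj[\bomega]_i}\to\frac{1}{\omega_{\max}}\int_0^{\omega_{\max}}\e^{-\jj x}\,\dd x=\frac{1-\e^{-\jj\omega_{\max}}}{\jj\omega_{\max}}.
\end{equation*}
This convergence is weak convergence of the empirical measure, or a Riemann-sum argument if the phases are treated as a quasi-uniform grid. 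Multiplying by $N\e^{\jj[\bomega]_n}$ gives the stated identity.

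\textbf{Main obstacle.} The difficulty is interpretive rather than computational. The equality holds only asymptotically: the approximation error is $o(N)$, or $\bigO(1)$ for a near-uniform grid, compared with a main term of order $N$. So "for sufficiently large $N$" must be read as equality to leading order in $N$.

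To make this precise, I would state the error term explicitly. For equally spaced phases $[\bomega]_i=(i-1)\omega_{\max}/N$, the geometric series gives the exact value
\begin{equation*}
\sum_{i=1}^N\e^{-\jj[\bomega]_i}=\frac{1-\e^{-\jj\omega_{\max}}}{1-\e^{-\jj\omega_{\max}/N}}.
\end{equation*}
Its denominator behaves like $\jj\omega_{\max}/N$, which recovers the stated formula as $N\to\infty$. I would present this grid case as the concrete verification, and the general empirical-uniform case via the integral limit in Step 3.
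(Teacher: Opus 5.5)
Your proposal is correct and follows essentially the same route as the paper: you factor the row sum as $\e^{\jj[\bomega]_n}\sum_{i}\e^{-\jj[\bomega]_i}$ and replace the empirical average $\frac{1}{N}\sum_i \e^{-\jj[\bomega]_i}$ by $\frac{1}{\omega_{\max}}\int_0^{\omega_{\max}}\e^{-\jj\omega}\,\dd\omega$, a step the paper justifies by appealing to the law of large numbers. Your additions are sound refinements of that argument rather than a different proof: you recover $\bS=\bs\bs^\Herm$ from C3--C5 instead of recalling it, you phrase the limit as weak convergence of the empirical measure, and your equispaced-grid geometric sum makes explicit that the identity holds only to leading order in $N$, with an $\bigO(1)$ error.
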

\begin{proof}
    Recalling $\bS=\bs\bs^\Herm$ with $\bs=[\e^{\jj[\bomega]_1},\,\cdots,\,\e^{\jj[\bomega]_N}]$, we have
    \begin{align}
        \sum_{i=1}^N [\bS]_{n,i}&=\e^{\jj[\bomega]_n}\sum_{i=1}^N \e^{-\jj[\bomega]_i}=\e^{\jj[\bomega]_n}N\sum_{i=1}^N\frac{\e^{-\jj[\bomega]_i}}{N}\\
        &\stackrel{(a)}{=}\frac{\e^{\jj[\bomega]_n}N}{\omega_{\max}} \int_{0}^{\omega_{\max}}\e^{-\jj\omega} \dd\omega=\frac{\e^{\jj[\bomega]_n}N(1-\e^{-\jj\omega_{\max}})}{\jj\omega_{\max}},
    \end{align}
    where $(a)$ holds based on law of large numbers \cite{papoulis2002probability}.
\end{proof}

\begin{lem}
\label{lemma constraint C2}
    The constraint $0\leq[\bomega]_n\leq\omega_\tmax,$
    , where $\pi<\omega_{\max}<2\pi$, is equivalent to
     $\real([\bs]_n)+\tan(\frac{\omega_{\max}}{2})\imag([\bs]_n)\leq1$,
     where $[\bs]_n=\e^{\jj[\bomega]_n}$ and $\real(\cdot)$ and $\imag(\cdot)$ denote the real and imaginary parts, respectively.
\end{lem}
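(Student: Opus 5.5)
The plan is to treat the statement as a purely geometric fact about points on the unit circle. Write $[\bs]_n=\e^{\jj\omega}$ with $\omega=[\bomega]_n\in[0,2\pi)$, so that $\real([\bs]_n)=\cos\omega$ and $\imag([\bs]_n)=\sin\omega$. Put $a\defeq\omega_{\max}/2$. The hypothesis $\pi<\omega_{\max}<2\pi$ gives $a\in(\pi/2,\pi)$. Two consequences are used throughout: $\cos a<0$, and $\tan a$ is finite and negative. The inequality to study is therefore
\begin{equation*}
\cos\omega+\tan(a)\sin\omega\le 1 .
\end{equation*}
Geometrically, the boundary line $x+\tan(a)\,y=1$ is the chord through $1$ and $\e^{\jj\omega_{\max}}$. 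The claim says that the half-plane on the origin's side of this chord cuts out exactly the arc $[0,\omega_{\max}]$.

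\textbf{Step 1: rewrite as a cosine comparison.} Multiply both sides by $\cos a$. Since $\cos a<0$, the inequality reverses, and the condition is equivalent to
\begin{equation*}
\cos\omega\cos a+\sin\omega\sin a\ge\cos a,\qquad\text{i.e.}\qquad\cos(\omega-a)\ge\cos a .
\end{equation*}
As a sanity check, $\omega=0$ and $\omega=\omega_{\max}=2a$ give equality, since both lie on the chord.

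\textbf{Step 2: resolve the cosine inequality.} Use that $\cos$ is even, $2\pi$-periodic, and strictly decreasing on $[0,\pi]$. Let $d(\omega)\in[0,\pi]$ be the wrapped angular distance between $\omega$ and $a$. Since $a\in(0,\pi)$, the condition $\cos(\omega-a)\ge\cos a$ holds exactly when $d(\omega)\le a$.

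\textbf{Step 3: locate the arc.} This is the step needing care, because of the wrap-around. For $\omega\in[0,2\pi)$ we have $\omega-a\in[-a,\,2\pi-a)$, and $2\pi-a>\pi$ because $a<\pi$. Split into two cases.
\begin{itemize}
\item If $\omega-a\in[-a,\pi]$, then $d=|\omega-a|$. Here $d\le a$ exactly when $\omega\in[0,2a]=[0,\omega_{\max}]$. This uses $2a\le a+\pi$, which holds because $a<\pi$.
\item If $\omega-a\in(\pi,2\pi-a)$, then $d=2\pi-(\omega-a)\in(a,\pi)$. This strictly exceeds $a$, so the constraint fails, consistent with $\omega>\omega_{\max}$ in this range.
\end{itemize}
Combining the two cases gives the claimed equivalence. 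The lower bound $\omega\ge0$ is automatic from the parametrization, and the closed endpoints match the non-strict inequality.

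The role of the hypothesis $\omega_{\max}>\pi$ is worth stating explicitly. It is exactly what makes $\cos a<0$, which reverses the inequality in Step 1 and puts the origin on the feasible side of the chord. For $\omega_{\max}<\pi$, the same chord would instead select the complementary arc, so the lemma genuinely needs this range.
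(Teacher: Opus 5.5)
Your proof is correct, but it takes a different route from the paper's. The paper works with half-angle identities. With $\omega=[\bomega]_n$ and $a=\omega_{\max}/2$ as in your notation, it rewrites the inequality as $\tan(a)\sin\omega\le 1-\cos\omega$. It then substitutes $\sin\omega=2\sin\frac{\omega}{2}\cos\frac{\omega}{2}$ and $1-\cos\omega=2\sin^2\frac{\omega}{2}$, divides by $\sin\frac{\omega}{2}\ge 0$, and divides again by $\cos\frac{\omega}{2}$. This splits the argument into two cases. For $0<\omega<\pi$ the condition becomes $\tan a\le\tan\frac{\omega}{2}$, which holds automatically since $\tan a<0$. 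For $\pi<\omega<2\pi$ it becomes $\tan a\ge\tan\frac{\omega}{2}$, i.e., $\omega\le\omega_{\max}$ by monotonicity of $\tan$ on $(\pi/2,\pi)$. You instead multiply once by the constant $\cos a<0$ to reach $\cos(\omega-a)\ge\cos a$, and read off the arc from the monotonicity of cosine via the wrapped distance. Your wrap-around split in Step 3 plays the role of the paper's case split at $\omega=\pi$. The advantage of your route is that you never divide by an $\omega$-dependent quantity. The points $\omega=0$ (where $\sin\frac{\omega}{2}=0$) and $\omega=\pi$ (where $\cos\frac{\omega}{2}=0$), which the paper's open-interval cases skip without comment, are therefore covered automatically. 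Your chord interpretation also makes clear why $\omega_{\max}>\pi$ is essential. The paper's route, in turn, makes explicit a practically useful fact: every phase in $[0,\pi)$ satisfies the linear constraint whatever $\omega_{\max}$ is, so the constraint only restricts the arc $(\pi,2\pi)$.
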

\begin{proof}
    Starting by $[\bs]_n=\e^{\jj[\bomega]_n}=\cos([\bomega]_n)+\jj\sin([\bomega]_n)$, we can express $\real([\bs]_n)+\tan(\frac{\omega_{\max}}{2})\imag([\bs]_n)$ as $\cos([\bomega]_n)+\tan(\frac{\omega_{\max}}{2})\sin([\bomega]_n)$. By substituting this into the inequality, it yields that
    \begin{align}
        \tan(\frac{\omega_{\max}}{2})\sin([\bomega]_n)\leq1-\cos([\bomega]_n).
    \end{align}
    Using the known trigonometric equations $\sin([\bomega]_n)=2\sin(\frac{[\bomega]_n}{2})\cos(\frac{[\bomega]_n}{2})$ and $1-\cos([\bomega]_n)=2\sin^2(\frac{[\bomega]_n}{2})$, we can transform the inequality into:
    \begin{align}
        2\tan(\frac{\omega_{\max}}{2})\sin(\frac{[\bomega]_n}{2})\cos(\frac{[\bomega]_n}{2})\leq2\sin^2(\frac{[\bomega]_n}{2}).
    \end{align}
    We can divide out $\sin(\frac{[\bomega]_n}{2})$ from both sides of the inequality since $\sin(\frac{[\bomega]_n}{2})\geq0$ for $0\leq[\bomega]_n\leq2\pi$. Consequently, the inequality simplifies to
    \begin{align}
        \tan(\frac{\omega_{\max}}{2})\cos(\frac{[\bomega]_n}{2})\leq\sin(\frac{[\bomega]_n}{2}).
    \end{align}
    This is equivalent to
    \begin{align}
       \begin{cases}
            \textbf{Case 1:} \tan(\frac{\omega_{\max}}{2})\leq\tan(\frac{[\bomega]_n}{2}),\,\,0<[\bomega]_n<\pi,\\
            \textbf{Case 2:} \tan(\frac{\omega_{\max}}{2})\geq\tan(\frac{[\bomega]_n}{2}),\,\,\pi<[\bomega]_n<2\pi.
        \end{cases}
    \end{align}
    Case 1 always holds because $\tan(\frac{\omega_{\max}}{2})<0$ when $\pi<\omega_{\max}<2\pi$ while $\tan(\frac{[\bomega]_n}{2})>0$. Case 2 is satisfied as long as $[\bomega]_n\leq\omega_{\max}$, thus concluding the proof.
\end{proof}
According to Lemma~\ref{lemma integral} and Lemma~\ref{lemma constraint C2}, we can conclude that the constraint C2 can be written as
\begin{equation}
    \widehat{\text{C2}}:\real(\zeta\sum_{i=1}^N [\bS]_{n,i})+\tan(\frac{\omega_{\max}}{2})\imag(\zeta\sum_{i=1}^N [\bS]_{n,i})\leq1, \forall n,
\end{equation}
where $\zeta\defeq\frac{\jj\omega_{\max}}{N(1-\e^{-\jj\omega_{\max}})}$. Unlike C2, constraint $\widehat{\text{C2}}$ is convex in $\bS$. These two constraints are equivalent under three key assumptions: $(i)$ a sufficiently large number of \gls{RIS} elements ($N$), $(ii)$ an empirically uniform distribution of phase shifts in the interval $[0,\,\omega_{\max}]$, and $(iii)$ $\pi<\omega_{\max}<2\pi$. Based on our observations, $N\geq50$ is sufficient for this approximation. While a uniform distribution is not generally guaranteed, the individual phase shifts tend to be rather random in \gls{NF} regime when \gls{RIS} serves an area, making the assumption reasonable. Regarding the third assumption, experimental results confirm that even at higher temperatures typical in outdoor scenarios, $\omega_{\max}$ does not drop below $\pi$ \cite{tesmer2021temperature}.

Therefore, problem P4 is transformed into the following optimization problem.
\begin{subequations}
\label{eq:optimization 6}
\begin{align}
    \text {P5:}\quad&~\underset{\bS,\gamma}{\max}~\gamma-\eta^{(i)}\Big(\|\bS\|_*-\|\bS^{(i)}\|_2-\tr\big(\blambda_{\max}(\bS^{(i)})\nonumber
    \\&\quad\quad\quad\times\blambda_{\max}^\Herm(\bS^{(i)})(\bS-\bS^{(i)})\big)\Big)
    \\&~\text {s.t.} ~~\text{C1}, \widehat{\text{C2}},\text{ C3, C5}.
\end{align}
\end{subequations}

\paragraph{Coupled $\gamma$ and $\bS$ in C1} To address the coupling of $\bS$ and $\gamma$ in C1, we employ \gls{AO}, where one variable is fixed while the other is optimized. On one hand, when $\gamma$ is fixed, the optimization problem P5 becomes convex in terms of the matrix $\bS$ because the objective function is concave, and the constraints define a convex set. Therefore, it can be efficiently solved using standard convex optimization solvers such as CVX \cite{cvx}. On the other hand, when the matrix $\bS$ is fixed, the problem P5 is linear
in terms of $\gamma$ and its closed-form solution is given by:
\begin{equation}
\label{eq: best gamma}
    \gamma=\underset{\forall \bp_u\in\Pset_u,\,\forall\bp_e\in\Pset_e}{\min}~\frac{\tr(\bA_u(\bp_u)\bS)+1}{\tr(\bA_e(\bp_e)\bS)+1}.
\end{equation}
\begin{algorithm}[t]
\caption{Proposed SDP-based Algorithm for \gls{LC}-\gls{RIS} phase shift design}\label{alg:cap}
\begin{algorithmic}[1]
\STATE \textbf{Initialize:} $\omega_{\max}$, $\bs^{(0)}=\e^{\jj\omega_{\max}\times\mathrm{rand}(N)},\bS^{(0)}=\bs^{(0)}{\bs^{(0)}}^\Herm$.
\WHILE{$|\log_2(\gamma^{(j)})-\log_2(\gamma^{(j-1)})|\geq\epsilon_2$ and $j\leq J_{\max}$}
    \WHILE{$\|\bS^{(i)}-\bS^{(i-1)}\|_F^2\geq\epsilon_1$ and $i\leq I_{\tmax}$}
    \STATE Solve convex P5 for given $\bS^{(i-1)}$ and $\gamma$, and store the intermediate solution $\bS$.
    \STATE Set $i = i + 1$ and update $\bS^{(i)}=\bS$ and $\eta^{(i)} =5\eta^{(i-1)}$.
    \ENDWHILE
    \STATE Calculate new $\gamma$ according to the \eqref{eq: best gamma} and set $j = j + 1$.
\ENDWHILE
\end{algorithmic}
\end{algorithm} 
The proposed \gls{SDP}-based algorithm is summarized in Algorithm \ref{alg:cap}. It consists of two loops; the inner loop finds a
rank-one solution to Problem P5, and the outer loop maximizes the secure rate.

\subsubsection{Low-complexity method}
\label{sec: Analytical method}

\begin{algorithm}[t]
\caption{Proposed low-complexity Algorithm for the \gls{LC}-\gls{RIS} phase shift design}
\label{alg:cap 2}
\begin{algorithmic}[1] 
\STATE \textbf{Input:} Location sets $\Pset_u$ and $\Pset_e$, $T$, $\omega_\tmax(T)$, $\mu_0 > 0$, $J_\text{max}$, and $I_\text{max}$.
\STATE  \textbf{Initialize:} $\gamma$, $\tilde{\bs}(T)$, and $\RS_\text{final} = -\infty$.
\FOR{$j = 1$ \TO $J_\text{max}$}
\STATE Reset smoothing parameter $\mu = \mu_0$.
    \FOR{$i = 1$ \TO $I_\text{max}$}
    \STATE Update smoothing parameter $\mu \leftarrow \mu \times 0.88$.
        \STATE Compute $\bA_{u,e}(\gamma)$ and $\bPhi_{u,e}(\gamma),\,\forall(\bp_u,\bp_e)\in\Pset_u\times\Pset_e$ using \eqref{eq: calculation matrix A} and \eqref{eq: calculation matrix Phi}.
        \STATE Calculate the surrogate vector $\bbeta_{u,e}(\gamma),\,\forall(\bp_u,\bp_e)\in\Pset_u\times\Pset_e$ by \eqref{eq: beta calculation}.
        \STATE Evaluate the secrecy rate $\widetilde{\RS}(\bp_u,\bp_e),\, \forall (\bp_u,\bp_e) \in \Pset_u \times \Pset_e$ using $\tilde{\bs}(T)$.
        \STATE Calculate the spatial weights $w(\bp_u,\bp_e)$ and $w_\text{total}$ via the LSE approximation in \eqref{eq: LSE lower bound}.
        \STATE Compute the unconstrained optimal phase angles $\bomega(T)$ using \eqref{eq:LSE}.
        \STATE Project $\bomega(T)$ into the feasible temperature-constrained domain to obtain $\bs(T)$ via the wrapping function in \eqref{eq: phase thresholding}.
        \IF{$\underset{\bp_u,\bp_e}{\min}\widetilde{\RS}(\bp_u,\bp_e) > \RS_\text{final}$}
            \STATE $\RS_\text{final} \leftarrow \underset{\bp_u,\bp_e}{\min}\widetilde{\RS}(\bp_u,\bp_e)$,\quad $\bs^\star(T) \leftarrow \bs(T)$
        \ENDIF
        
        \STATE Update $\tilde{\bs}(T) \leftarrow \bs(T)$.
    \ENDFOR
    \STATE Update the auxiliary variable $\gamma$ using \eqref{eq: best gamma}.
\ENDFOR
\STATE \textbf{Output:} The optimal temperature-adaptive phase-shift vector $\bs^\star(T) = \bs(T)$.
\end{algorithmic}
\end{algorithm}

Although the \gls{SDP}-based algorithm detailed in Section~\ref{sec: Optimization-based method} provides a high-performance benchmark, its substantial computational overhead (analyzed in Section~\ref{sec: Algorithm and complexity analysis}) limits its practical applicability for extremely large \gls{LC}-\glspl{RIS}. Consequently, this section develops a highly scalable, low-complexity alternative. To address the non-convexity of Problem~P2, with the help of the definitions in \eqref{eq: gamma definition}, \eqref{eq: phase shift vector}, and \eqref{eq: SNR in term of s}, we reformulate the optimization problem as follows:
 \begin{subequations}
\label{eq:optimization 7}
\begin{align}
    \text{P6:}&~\underset{\gamma,\bs(T)}{\max}~\gamma
    \\&~\text {s.t.} ~~\text{C1: }\frac{1+\SNR_u(T)}{1+\SNR_e(T)}\geq\gamma, \forall (\bp_u,\bp_e)\!\in\!\Pset_u\!\times\!\Pset_e,
    \\&\quad\hphantom {\text {s.t.} }\text{C2: } 0\leq\arg\left(\bs\left(T\right)\right)\leq\omega_\tmax(T),
    \\&\quad\hphantom {\text {s.t.} }\text{C3: } |[\bs(T)]_n|=1,\,\forall n.
\end{align}
\end{subequations}
We propose to optimize the auxiliary variable $\gamma$ and the phase-shift vector $\bs(T)$ in an alternating manner. However, even for a fixed $\gamma$, Problem~P6 remains non-convex due to the unit-modulus constraint C3. To render the optimization tractable, we maximize a surrogate lower bound \cite{Shen2019}. Let us define
\begin{align}
\label{eq: calculation matrix A}
    \bA_{u,e}(\gamma)&\defeq\bA_u(\bp_u)-\gamma\bA_e(\bp_e),\\
    \bPhi_{u,e}(\gamma)&\defeq\bA_{u,e}(\gamma)-\lambda_{\min}(\bA_{u,e}(\gamma))\bI_N,
    \label{eq: calculation matrix Phi}
\end{align}
where $\lambda_{\min}(\cdot)$ extracts the minimum eigenvalue of a matrix. Due to the subtracted term $-\gamma\bA_e(\bp_e)$ in \eqref{eq: calculation matrix A}, this minimum eigenvalue is inherently negative. Utilizing these definitions, we can substitute the quadratic term in constraint C1, $\bs^\Herm(T)\bA_{u,e}(\gamma)\bs(T)-\gamma+1\geq0$, with a surrogate lower bound provided by the following lemma.
\begin{lem}
\label{lem: lower bound}
    A valid lower bound for $\bs(T)^\Herm\bA_{u,e}(\gamma)\bs(T)$ is
    \begin{equation}
        \lambda_{\min}(\bA_{u,e}(\gamma)) N+2\real\{\bs^\Herm(T)\bbeta_{u,e}(\gamma)\}-\tilde{\bs}^\Herm(T)\bPhi_{u,e}(\gamma)\tilde{\bs}(T),
    \end{equation}
    where
    \begin{equation}
        \label{eq: beta calculation}
        \bbeta_{u,e}(\gamma)\defeq\bPhi_{u,e}(\gamma)\tilde{\bs}(T).
    \end{equation}
 Here, $\gamma$ and the constant vector $\tilde{\bs}(T)$ are fixed, and equality is achievable by $\bs(T)=\tilde{\bs}(T)$.
\end{lem}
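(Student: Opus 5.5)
The plan is the standard minorization-maximization argument built on positive semidefiniteness of the shifted matrix. Write $\bs=\bs(T)$ and $\tilde{\bs}=\tilde{\bs}(T)$, and drop the argument $\gamma$. Using the definition of $\bPhi_{u,e}$ in \eqref{eq: calculation matrix Phi}, decompose
\begin{equation*}
\bs^\Herm\bA_{u,e}\bs=\lambda_{\min}(\bA_{u,e})\,\bs^\Herm\bs+\bs^\Herm\bPhi_{u,e}\bs .
\end{equation*}
The unit-modulus constraint C3 gives $\bs^\Herm\bs=\sum_n|[\bs]_n|^2=N$, so the first term equals $\lambda_{\min}(\bA_{u,e})N$ exactly. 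The task therefore reduces to lower-bounding the quadratic form $\bs^\Herm\bPhi_{u,e}\bs$ by a function that is affine in $\bs$.

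Since $\bA_u$ and $\bA_e$ are Hermitian (each is an outer product scaled by $1/\sigma_n^2$), $\bA_{u,e}$ is Hermitian. Hence $\bPhi_{u,e}=\bA_{u,e}-\lambda_{\min}(\bA_{u,e})\bI_N\succeq0$: its eigenvalues are those of $\bA_{u,e}$ shifted by $-\lambda_{\min}$, so all are nonnegative.

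Positive semidefiniteness then gives
\begin{equation*}
(\bs-\tilde{\bs})^\Herm\bPhi_{u,e}(\bs-\tilde{\bs})\ge0 .
\end{equation*}
Expand this and use the Hermitian symmetry $\tilde{\bs}^\Herm\bPhi_{u,e}\bs=(\bs^\Herm\bPhi_{u,e}\tilde{\bs})^*$. The result is
\begin{equation*}
\bs^\Herm\bPhi_{u,e}\bs\ge2\real\{\bs^\Herm\bPhi_{u,e}\tilde{\bs}\}-\tilde{\bs}^\Herm\bPhi_{u,e}\tilde{\bs}=2\real\{\bs^\Herm\bbeta_{u,e}\}-\tilde{\bs}^\Herm\bPhi_{u,e}\tilde{\bs},
\end{equation*}
with $\bbeta_{u,e}$ as in \eqref{eq: beta calculation}. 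Equivalently, this is the first-order Taylor minorization of a convex quadratic around $\tilde{\bs}$. Adding back $\lambda_{\min}(\bA_{u,e})N$ yields the claimed bound.

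For tightness, set $\bs=\tilde{\bs}$. The residual term $(\bs-\tilde{\bs})^\Herm\bPhi_{u,e}(\bs-\tilde{\bs})$ then vanishes, so the bound holds with equality. The identity $\bs^\Herm\bs=N$ also holds at this point, because $\tilde{\bs}$ is itself a feasible unit-modulus vector.

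No step is a real obstacle. The only point needing care is that the argument uses unit modulus twice: once to replace $\bs^\Herm\bs$ by $N$, and once to require that the expansion point $\tilde{\bs}$ is feasible so that equality holds. The bound is therefore valid on the feasible set of P6 rather than for arbitrary $\bs$.
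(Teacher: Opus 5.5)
Your proof is correct and is essentially the paper's argument. The paper simply defers to \cite[Lemma~3]{delbari2026wideband}, which is this same minorization: shift by $\lambda_{\min}(\bA_{u,e}(\gamma))\bI_N$ using $\bs^\Herm\bs=N$, then expand $(\bs-\tilde{\bs})^\Herm\bPhi_{u,e}(\gamma)(\bs-\tilde{\bs})\geq 0$. Your remark that unit modulus is needed both for the constant term and for tightness at $\bs=\tilde{\bs}$ is a correct clarification that the paper leaves implicit.
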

\begin{proof}
The proof is similar to that in \cite[Lemma 3]{delbari2026wideband}.
\end{proof}
By leveraging Lemma~\ref{lem: lower bound}, we can transform Problem~P6 into the maximization of its surrogate objective. To facilitate a closed-form update, we temporarily relax the temperature-dependent phase-range constraint C2, yielding the following relaxed subproblem:
\begin{subequations}
\label{eq:optimization 8}
\begin{align}
    \text{P7:}&~\underset{\bs(T)}{\max}~\underset{\bp_u,\bp_e}{\min}~\real\{\bs^\Herm(T)\bbeta_{u,e}(\gamma)\}
    \\&~\text {s.t.} ~~\text{C3}
\end{align}
\end{subequations}
Evidently, the relaxed Problem~P7 is not strictly equivalent to the original Problem~P6 and must be solved iteratively. A naive optimization strategy would be to fix the previous state $\tilde{\bs}(T)$ written in \eqref{eq: beta calculation} at each iteration and update $\bs(T)$ focusing exclusively on the specific spatial location that yields the lowest secrecy rate. However, this hard-minimum approach is highly vulnerable to the ping-pong effect, where the algorithm continuously oscillates between a finite set of active location constraints without converging. To circumvent this instability, rather than directly maximizing the non-smooth, worst-case \gls{RS} in Problem~P7, we maximize its \gls{LSE} approximation. This substitution provides a smooth, analytically tractable surrogate objective \cite[Lemma~4]{delbari2026wideband}.
\begin{lem}
    \label{lem: LSE}
    A valid lower bound for $\underset{n}{\min}~x_n$ is given by
    \begin{equation}
        \label{eq: lem LSE}
        \underset{n}{\min}~x_n \geq -\mu\log\left(\sum_n \exp\left(-\frac{x_n}{\mu}\right)\right),
    \end{equation}
    where $\mu > 0$ is a smoothing parameter. This approximation becomes exact (i.e., equality is achieved) as $\mu\to0$.
\end{lem}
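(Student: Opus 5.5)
The plan is the standard log-sum-exp argument. Assume the index set is finite, say $n=1,\dots,K$, as it is in the algorithm, where the pairs $(\bp_u,\bp_e)$ range over discretized sets $\Pset_u\times\Pset_e$. Set $x_{\min}\defeq\min_n x_n$ and pull the dominant exponential out of the sum:
\begin{equation*}
\sum_n \exp\!\left(-\frac{x_n}{\mu}\right)=\exp\!\left(-\frac{x_{\min}}{\mu}\right)\sum_n \exp\!\left(-\frac{x_n-x_{\min}}{\mu}\right).
\end{equation*}
Every term in the remaining sum lies in $(0,1]$, because $x_n-x_{\min}\geq0$ and $\mu>0$. At least one term, namely the one at a minimizing index, equals exactly $1$. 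Hence the sum $\Sigma_\mu\defeq\sum_n \exp(-(x_n-x_{\min})/\mu)$ satisfies $1\leq\Sigma_\mu\leq K$.

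Take logarithms and multiply by $-\mu<0$, remembering that this reverses the inequality direction. This gives
\begin{equation*}
-\mu\log\left(\sum_n \exp\left(-\frac{x_n}{\mu}\right)\right)=x_{\min}-\mu\log\Sigma_\mu .
\end{equation*}
Since $\log\Sigma_\mu\geq0$, the right-hand side is at most $x_{\min}$, which is exactly the claimed lower bound. The same identity also gives the two-sided estimate $x_{\min}-\mu\log K\leq -\mu\log\sum_n e^{-x_n/\mu}\leq x_{\min}$.

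For exactness as $\mu\to0$, I use the sandwich just obtained. The gap is at most $\mu\log K$, and this tends to $0$ as $\mu\to0$ because $K$ is fixed. A sharper statement also holds: each non-minimal term $e^{-(x_n-x_{\min})/\mu}$ tends to $0$. So $\Sigma_\mu$ tends to the multiplicity of the minimum, and the gap is $\mu\log\Sigma_\mu\to0$.

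The only real subtlety is the meaning of ``equality is achieved.'' For $\mu>0$, equality holds only when $\Sigma_\mu=1$, which is impossible with $K\geq2$ terms since every term is strictly positive. So the statement should be read as a limit: the bound converges to $\min_n x_n$, with gap $\mathcal{O}(\mu\log K)$. I would state it that way. The finiteness of the index set is essential for the gap bound, so I would make that assumption explicit, consistent with the discretized location sets used in Algorithm~\ref{alg:cap 2}.
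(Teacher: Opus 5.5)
Your argument is correct and is the standard log-sum-exp sandwich $x_{\min}-\mu\log K\leq -\mu\log\sum_n e^{-x_n/\mu}\leq x_{\min}$. The paper does not give its own proof and instead defers to \cite[Lemma~4]{delbari2026wideband}; your two refinements of the lemma's wording are accurate: the bound needs a finite (discretized) index set, and for $K\geq 2$ equality never holds at any $\mu>0$, so ``exact as $\mu\to0$'' should be read as convergence with a gap of at most $\mu\log K$.
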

\begin{proof}
    The proof follows similar steps to those detailed in \cite[Lemma~4]{delbari2026wideband}.
\end{proof}
Applying Lemma~\ref{lem: LSE}, we obtain:
\begin{subequations}
\label{eq: LSE lower bound}
    \begin{align}
    &\underset{\bp_u,\bp_e}{\min}\widetilde{\RS}(\bp_u,\bp_e) \geq -\mu\log(w_\text{total}),\\
&w_\text{total}\defeq\sum_{\bp_u\in\Pset_U}\sum_{\bp_e\in\Pset_e}w(\bp_u,\bp_e),\\
        &w(\bp_u,\bp_e)\defeq\exp(-\frac{\widetilde{\RS}(\bp_u,\bp_e)}{\mu}),
    \end{align}
\end{subequations}
where the argument $\widetilde{\RS}(\bp_u,\bp_e)$ is computed using the phase-shift vector $\tilde{\bs}(T)$ obtained from the preceding iteration. By assigning exponentially larger weights to the spatial scenarios exhibiting the lowest secrecy rates, the unconstrained optimal phase-shift vector can be extracted directly via its phase angle:
\begin{equation}
\label{eq:LSE}
\bomega(T)= \arg\!\left(\sum_{\bp_u\in\Pset_U}\sum_{\bp_e\in\Pset_e}\!\!\!\frac{w(\bp_u,\bp_e)}{w_\text{total}}\bbeta_{u,e}(\gamma)\!\right)\!\!.
\end{equation}
While this analytically derived vector is optimal for the relaxed Problem~P7, it does not necessarily satisfy the temperature-dependent phase-range constraint C2 of the original Problem~P6. To restore feasibility, we project the solution back into the valid phase domain using the following piecewise wrapping function:
\begin{equation}
    \label{eq: phase thresholding}
    [\bs(T)]_n=\begin{cases}
    &\e^{\jj[\bomega(T)]_n},\quad \text{If } 0\leq[\bomega]_n\leq\omega_\tmax\\
    &\e^{\jj\omega_\tmax},\quad \text{If } \omega_\tmax\leq[\bomega]_n\leq\frac{2\pi+\omega_\tmax}{2}\\
    &1,\quad \text{Otherwise.}
    \end{cases}
\end{equation}
The vector $\bs(T)$ obtained via \eqref{eq: phase thresholding} constitutes a strictly feasible solution for Problem~P6. This procedure is executed iteratively, with the reference vector $\tilde{\bs}(T)$ updated by the newly projected $\bs(T)$ at each step. Once the sequence of phase-shift vectors converges, the auxiliary variable $\gamma$ is updated according to \eqref{eq: best gamma}. The complete procedure for this low-complexity approach is summarized in Algorithm~\ref{alg:cap 2}.

\subsection{Beamformer Design}
\label{sec: Beamformer Design}
In this step, we assume $\bomega$ is fixed and the only variable of the problem is $\bq$. In addition, since the LOS link is the dominant path at higher frequencies, we design the beamformer based on the \gls{LOS} link. This assumption is often valid, especially because both the BS and \gls{RIS} are positioned at elevated locations above the ground. Based on this assumption, $\bH_t^\LOS$ can be decomposed as:
\begin{equation}
\label{eq: H_t}
    \bH_t^\LOS=c_0\ba_\text{RIS}(\bp_\BS)\ba^\Herm_\BS(\bp_\RIS),
\end{equation}
where $\ba_\RIS(\cdot)$ and $\ba^\Herm_\BS(\cdot)$ are the steering vectors at the \gls{RIS} and \gls{BS}, respectively, and $\|\ba_\RIS(\cdot)\|=\|\ba^\Herm_\BS(\cdot)\|=1$ where their elements are defined in \eqref{Eq:LoSnear}. Due to the large \gls{LC}-\gls{RIS}, we assume \gls{NF} model for $\ba_\RIS(\cdot)$ \cite{delbari2024nearfield}. Moreover, $\bp_\BS$ and $\bp_\RIS$ are the locations of \gls{BS} and \gls{RIS}, respectively. $c_0$ denotes the channel attenuation factor of the \gls{LOS} link. By assuming a fixed $\bomega$, the problem P1 reduces to the following sub-problem:
\begin{subequations}
\label{eq:optimization 2}
\begin{align}
    \text {P8:}\quad&~\underset{\bq,\alpha}{\max}~\alpha
    \\&~\text {s.t.} ~~\RS(T)\geq \alpha,\, \forall \bp_u\in\Pset_u,\,\forall\bp_e\in\Pset_e,
    \\&\quad\hphantom {\text {s.t.} } \|\bq\|_2^2\leq P_t.
\end{align}
\end{subequations}
The optimal beamformer is obtained using the following lemma \cite{delbari2024temperature}.
\begin{lem}
\label{lemma beamforming}
Under the assumption of blocked direct links for both the legitimate user and the eavesdropper, a dominant \gls{LOS} channel, and given fixed \gls{RIS} phase shifts, $\bq=\sqrt{P_t}\ba_\BS(\bp_\RIS)$ represents the optimal beamformer for P8.
\end{lem}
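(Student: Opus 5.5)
With the direct links blocked ($\bh_{d,g}\approx\bzero_{N_t}$) and the BS--RIS link reduced to its LOS part \eqref{eq: H_t}, the effective channel collapses to a scalar times one fixed direction, and the optimal $\bq$ follows from a monotonicity argument plus Cauchy--Schwarz. First I substitute \eqref{eq: H_t} into the effective channel for $g\in\{u,e\}$:
\begin{equation*}
(\bh_g^\eff)^\Herm\bq = c_0\,\bh_{r,g}^\Herm\bGamma(T)\ba_\RIS(\bp_\BS)\,\ba_\BS^\Herm(\bp_\RIS)\bq = c_0\, b_g(\bp_g)\, z,
\end{equation*}
where $b_g(\bp_g)\defeq\bh_{r,g}^\Herm\bGamma(T)\ba_\RIS(\bp_\BS)$ is a scalar that depends on the location $\bp_g$ and on the fixed $\bomega$, but not on $\bq$. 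The remaining factor $z\defeq\ba_\BS^\Herm(\bp_\RIS)\bq$ depends only on $\bq$ and is common to both users and to all locations. Hence
\begin{equation*}
\SNR_g = |c_0|^2|b_g(\bp_g)|^2|z|^2/\sigma_n^2 .
\end{equation*}
The beamformer therefore enters P8 only through the single nonnegative scalar $x\defeq|z|^2$.

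Next I write the worst-case objective as $\min_{\bp_u,\bp_e}\widetilde{\RS}=\min_{\bp_u,\bp_e}\log\frac{1+a_u x}{1+a_e x}$, with $a_g=|c_0 b_g|^2/\sigma_n^2$. For fixed $(\bp_u,\bp_e)$ the derivative of $\log\frac{1+a_u x}{1+a_e x}$ with respect to $x$ is
\begin{equation*}
\frac{a_u-a_e}{(1+a_u x)(1+a_e x)} .
\end{equation*}
There are two cases for the optimized phase shifts.
\begin{itemize}
\item If $a_u\ge a_e$ for every pair, each term is nondecreasing in $x$, so their pointwise minimum is nondecreasing as well, and P8 is solved by maximizing $x$.
\item If some pair has $a_u<a_e$, that pair gives $\widetilde{\RS}<0$ for every $x>0$, so $\RS=[\widetilde{\RS}]^+=0$ regardless of $\bq$. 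Any feasible beamformer is then optimal, including the proposed one.
\end{itemize}

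In the first case I maximize $x=|\ba_\BS^\Herm(\bp_\RIS)\bq|^2$ subject to $\|\bq\|^2\le P_t$. Since $\|\ba_\BS\|=1$, Cauchy--Schwarz gives $x\le P_t$, with equality for $\bq=\sqrt{P_t}\,\ba_\BS(\bp_\RIS)$ up to a common phase, which is irrelevant. This choice is independent of $\bp_u$ and $\bp_e$, so it is simultaneously optimal for every constraint in P8, which proves the lemma.

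The main obstacle is the sign issue in the monotonicity step. I handle it by observing that a negative-gap pair forces the secrecy rate to zero whatever $\bq$ is, so that case is trivial. The other point needing care is that the rank-one factorization \eqref{eq: H_t} is exact only for the LOS part of $\bH_t$. The lemma should therefore be read under its stated dominant-LOS assumption, with the nLOS terms of \eqref{eq: channel model} neglected.
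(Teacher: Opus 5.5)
Your proposal is correct and follows essentially the same route as the paper: you reduce the beamformer's influence to the single scalar $|\ba_\BS^\Herm(\bp_\RIS)\bq|^2$, then split into two cases. When every pair has a nonnegative SNR gap, monotonicity plus Cauchy--Schwarz gives $\bq=\sqrt{P_t}\,\ba_\BS(\bp_\RIS)$; when some pair has a negative gap, the secrecy rate is zero for any $\bq$. Your explicit derivative and your remark that the maximizer does not depend on $(\bp_u,\bp_e)$ just spell out steps the paper states more briefly.
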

\begin{proof}
By inserting the \gls{LOS} channel model $\bH_t^\LOS$ from \eqref{eq: H_t} into the secrecy rate expression \eqref{eq: SNR} and omitting the $\bh_{d,g},\,g=\{u,e\}$ due to blockage, the secrecy rate simplifies to:
\begin{equation}
    \RS=\Big[\log\big(\frac{1+\overbrace{|\bh_{r,u}\bGamma c_0\ba_\RIS(\bp_\BS)|^2/\sigma^2_n}^{\zeta_u(\bp_u)}\times   \overbrace{|\ba^\Herm_\BS(\bp_\RIS)\bq|^2}^{\mu}}{1+\underbrace{|\bh_{r,e}\bGamma c_0\ba_\RIS(\bp_\BS)|^2/\sigma^2_n}_{\zeta_e(\bp_e)}\times   \underbrace{|\ba^\Herm_\BS(\bp_\RIS)\bq|^2}_{\mu}}\big)\Big]^+,
\end{equation}
where $\zeta_u(\bp_u)$ and $\zeta_e(\bp_e)$ are constant scaling factors in terms of the given \gls{RIS} phase shifts, whereas $\mu\in\Rset$ represents a design variable governed by $\bq$. In the regime where $\zeta_u(\bp_u)>\zeta_e(\bp_e),\,\forall \bp_u\in\Pset_u, \forall \bp_e\in\Pset_e$ holds, $\RS$ grows monotonically in $\mu$. Applying the Cauchy-Schwarz inequality, the upper bound of $\mu$ is $P_t$, which is achieved by aligning the beamformer as $\bq=\sqrt{P_t}\ba_{\BS}(\bp_{\RIS})$ \cite{tse2005fundamentals}. Conversely, if there exists at least one pair $(\bp_u,\bp_e)$, $\bp_u\in\Pset_u$ and $\bp_e\in\Pset_e$, such that $\zeta_u(\bp_u)\leq\zeta_e(\bp_e)$, the secrecy rate $\RS$ drops to zero regardless of the choice of $\mu$. Thus, selecting $\bq=\sqrt{P_t}\ba_{\BS}(\bp_{\RIS})$ remains optimal under all conditions, which concludes the proof.
\end{proof}

\subsection{Algorithm and complexity analysis}
\label{sec: Algorithm and complexity analysis}
The complexity of Algorithm~\ref{alg:cap} is dominated by the $\mathcal{O}(N^3)$ nuclear norm evaluation. Across $|\mathcal{P}_u||\mathcal{P}_e|K$ constraints from C1, this yields a total complexity of $\mathcal{O}(I_{\max}|\mathcal{P}_u||\mathcal{P}_e|N^3)$.

Conversely, Algorithm~\ref{alg:cap 2} exhibits a drastically lower complexity by exploiting the \gls{LOS}-dominant \gls{mmWave} regime, which restricts the rank of $\mathbf{A}_{u,e}(\gamma)$ to at most two. Instead of full eigenvalue decomposition in \eqref{eq: calculation matrix Phi}, the problem reduces to finding the roots of the characteristic polynomial for a $2 \times 2$ matrix. Because the spatial inner products in this matrix are independent of the iterative variables ($\gamma$ and $\bs(T)$), they can be precomputed. This reduces the dominant per-iteration operation to a simple quadratic equation, scaling the total complexity of Algorithm~\ref{alg:cap 2} linearly with $N$ to $\mathcal{O}(J_{\max}I_{\max}|\Pset_u||\Pset_e|N)$.

\section{Extension to Temperature-Robust Phase-Shift Design}
\label{sec: Robust Design}
While the temperature-adaptive methods presented in Section~\ref{sec: Temperature-Aware LC-RIS Phase-shift Design} provide a highly secure rate for legitimate users, they rely on the premise that the instantaneous operating temperature $T$ of the \gls{LC}-\gls{RIS} is perfectly known at the controller. This can be achieved by equipping the \gls{RIS} with an array of thermal sensors and continuously feeding this data back to the \gls{BS}. However, this may introduce hardware complexity and signaling overhead. To circumvent this, we propose a temperature-robust phase-shift design. The objective is to compute a single, static nominal phase-shift configuration, denoted as $\bs_0$, that guarantees a high secure data rate across a predefined range of potential operating temperatures, thereby eliminating the need for real-time thermal tracking.

Let $\Tset = [T_{\min}, T_{\max}]$ denote the set of all possible operating temperatures\footnote{We can define such a range for each country separately.}. Based on the physical \gls{LC} response characteristics detailed in Section~\ref{sec: LC phase shifter}, applying a nominal phase-shift vector $\bs_0$ for the reference temperature ($T_r$) at the controller results in an actual, temperature-dependent reflection vector $\bs(T) = \mathcal{F}(\bs_0, T)$, where $\mathcal{F}(\cdot)$ captures the thermal phase drift based on \eqref{eq: omega in temperature final}. As we assumed only \gls{RIS} is temperature-dependent and not \gls{BS}, the assumptions for the \gls{BS} remain the same, and we exploit Lemma~\ref{lemma beamforming} to optimize the \gls{BS} without any change. In addition, as discussed in Section~\ref{sec: Algorithm and complexity analysis}, the complexity of the \gls{SDP}-based algorithm is much more than the low-complexity method so we only explain the robust version of the scalable method due to the space constraint\footnote{With the same line of thought, the \gls{SDP}-based algorithm can be also extended into a temperature-robust version.}. Therefore, we start by modifying a version of the problem P6 and extending it into a robust problem. To ensure robust, secure communication, we formulate a worst-case optimization problem that jointly considers the spatial uncertainty of the users and the thermal uncertainty of the hardware:
\begin{subequations}
\label{eq:optimization_robust}
\begin{align}
    \text{P9:}&~\underset{\gamma,\bs_0}{\max}~\gamma
    \\&~\text {s.t.} ~~\text{C1: }\frac{1+\SNR_u(\bs(T))}{1+\SNR_e(\bs(T))}\geq\gamma, \nonumber\\
    &\qquad\qquad\forall (\bp_u,\bp_e, T) \in \Pset_u \times \Pset_e \times \Tset,
    \\&\quad\hphantom {\text {s.t.} }\text{C2: } \bs(T) = \mathcal{F}(\bs_0, T), \quad \forall T \in \Tset,
    \\&\quad\hphantom {\text {s.t.} }\text{C3: } |[\bs_0]_n|=1,\,\forall n.
\end{align}
\end{subequations}
Problem~P9 is more challenging because the worst-case secrecy rate must be maximized over a strictly larger, three-dimensional uncertainty space (legitimate user location, eavesdropper location, and temperature). 

To solve this efficiently without increasing the computational complexity too much, we extend the \gls{LSE} surrogate framework introduced in Lemma~\ref{lem: LSE}. Rather than evaluating the surrogate bounds solely over the spatial domain, we discretize the temperature uncertainty set $\Tset$ into $M$ representative sample temperatures, $\Tset_d = \{T_1, T_2, \dots, T_M\}$. We then evaluate the secrecy rate $\widetilde{\RS}(\bp_u,\bp_e, T_m)$ for all possible combinations of locations and temperatures simultaneously. 

By applying the \gls{LSE} approximation over this joint discrete space, the spatial-thermal weights are defined as:
\begin{subequations}
    \begin{align}
    \label{eq: robust weights}
    &w(\bp_u,\bp_e, T_m)\defeq\exp\left(-\frac{\widetilde{\RS}(\bp_u,\bp_e, T_m)}{\mu}\right),\\
    &w_\text{total}\defeq\sum_{T_m\in\Tset_d}\sum_{\bp_u\in\Pset_U}\sum_{\bp_e\in\Pset_e}w(\bp_u,\bp_e, T_m).
    \end{align}
\end{subequations}
This joint weighting mechanism, including the temperature range, is the key of the robust design. If a specific spatial configuration becomes highly vulnerable to eavesdropping only when the \gls{RIS} drops to a specific temperature (e.g., $T_m = 0^\circ\text{C}$), the \gls{LSE} formulation automatically assigns an exponentially larger weight to that specific $( \bp_u, \bp_e, T_m )$ item. 

Consequently, the aggregated unconstrained phase gradient incorporates the vulnerabilities from all thermal states. The robust nominal phase angles can be iteratively computed as:
\begin{equation}
\label{eq:LSE_robust}
\bomega_0 = \arg\!\left(\sum_{T_m\in\Tset_d}\sum_{\bp_u\in\Pset_U}\sum_{\bp_e\in\Pset_e}\!\!\!\frac{w(\bp_u,\bp_e, T_m)}{w_\text{total}}\bbeta_{u,e}(\gamma, T_m)\!\right)\!\!,
\end{equation}
where $\bbeta_{u,e}(\gamma, T_m)$ is the temperature-specific surrogate reference vector evaluated using $\tilde{\bs}(T_m)$. Finally, to ensure feasibility at the controller, $\bomega_0$ is projected into the valid hardware tuning range automatically by \eqref{eq: omega in temperature final} at each specific temperature. The resulting configuration $\bs_0^\star$ provides an intrinsically robust baseline that safeguards the secrecy rate against unpredictable environmental fluctuations.

\section{Performance Evaluation}
\label{sec: Performance Comparison}
\subsection{Simulation Setup}
\label{sec: Simulation Setup}
We employ the simulation configuration for coverage extension presented in Fig. \ref{fig:system model}, where the \gls{RIS} center is the origin of the Cartesian coordinate system, i.e., $[0,0,0]~\text{m}$. We assume there is a legitimate user in a fixed area $\Pset_u\in\{(\x,\y,\z):4~\text{m}\leq\x\leq 6~\text{m}, -3~\text{m}\leq\y\leq -1~\text{m}, \z=-5\}$. The \gls{BS} comprises a $16\times16=256$ \gls{UPA} positioned along the 
$\x-\z$ plane, and located at $[30,0,0]~\text{m}$. The \gls{RIS} is a \gls{UPA} consisting of $N_\y\times N_\z=20\times20$ elements aligned to the $\y$ and $\z$ axes, respectively, unless explicitly mentioned otherwise. The element space for both the \gls{BS} and \gls{RIS} is half of the wavelength. The noise variance is computed as $\sigma_n^2=WN_0N_{\rm f}$ with $N_0=-174$~dBm/Hz, $W=100$~MHz, and $N_{\rm f}=6$~dB. We assume $60$~GHz carrier frequency, and $\rho(d_0/d)^\sigma$ pathloss model where $\rho=-68$~dB at $d_0=1$~m. Moreover, we adopt the pathloss exponent $\sigma = (2,2,2)$ and Ricean $K$-factors in \eqref{eq: channel model}, $\bar{k}_r=\tilde{k}_r=(0,0.1,0.1),\,\forall r$, for the \gls{BS}-\gls{MU}, \gls{BS}-\gls{RIS}, and \gls{RIS}-\gls{MU} channels, respectively, and $R=10$. 

The analysis considers two scenarios based on the location of the eavesdropper \gls{w.r.t.} the legitimate user area. These scenarios are critical because the eavesdropper's location affects how the \gls{RIS} can optimize secure rate communication.
\begin{itemize}
    \item \textbf{Scenario 1: Same Distance, Different Angles:} Here, the eavesdropper is positioned away from the legitimate user at different angles seen from the \gls{RIS}. The eavesdropper area is at a different y-coordinate but the same x-coordinate as the legitimate user.
    \item \textbf{Scenario 2: Different Distance, Same Angle:} In this case, the eavesdropper is positioned at the same angle as the legitimate user but with a  different distance from the \gls{RIS}. The eavesdropper area is positioned below the legitimate user area.
\end{itemize}
Furthermore, two approaches are considered in managing the \gls{RIS} phase shifts: \textit{Neglecting} temperature changes, where no adjustment is made, and \textit{Optimizing} the phase shifts to account for temperature variations to ensure secure communication. The other parameters used in the simulations are as follows: $\beta=0.25$, $T_c=95~^{\circ}$C, $T_r=10~^{\circ}$C, $P_t=37~$dBm, $\eta^{(0)}=0.2$, $I_{\max}=15$, $J_{\max}=6$, $\epsilon_1=0.01$, $\epsilon_2=0.1$, and $\gamma^{(0)}=1$.
 \begin{figure}
    \centering
    \includegraphics[width=0.5\textwidth]{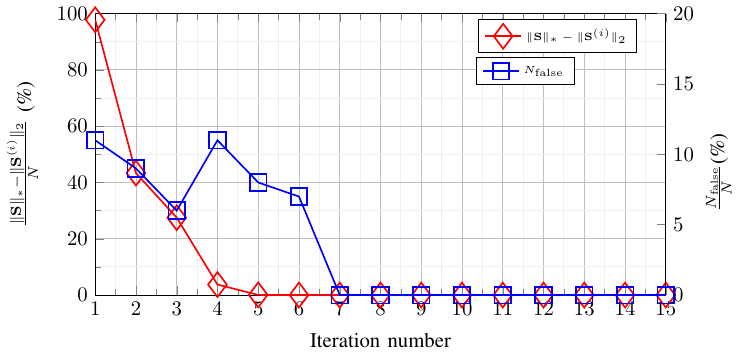}
    \caption{The percentage of $\|\bS\|_*-\|\bS^{(i)}\|_2$ and $N_\mathrm{false}$ out of total number of $N$ at each iteration, when $T=40^\circ$C.}
    \label{fig: rank}
    \vspace{-5mm}
\end{figure}

\begin{figure}
    \centering
    \includegraphics[width=0.5\textwidth]{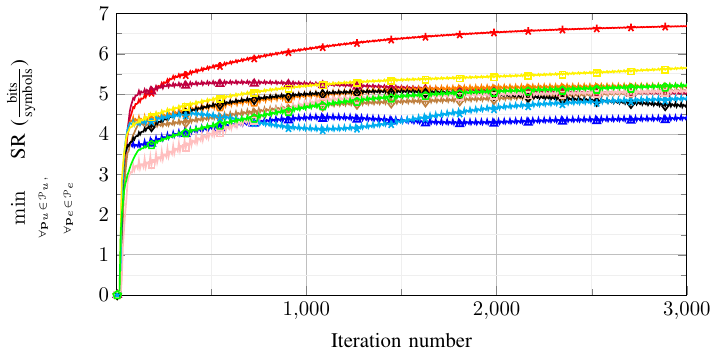}
    \caption{Convergence behavior of Algorithm~\ref{alg:cap 2} with different initializations versus the iteration number, when $T=40^\circ$C.}
    \label{fig: convergence}
    \vspace{-5mm}
\end{figure}

 \begin{figure}
    \centering
    \includegraphics[width=0.5\textwidth]{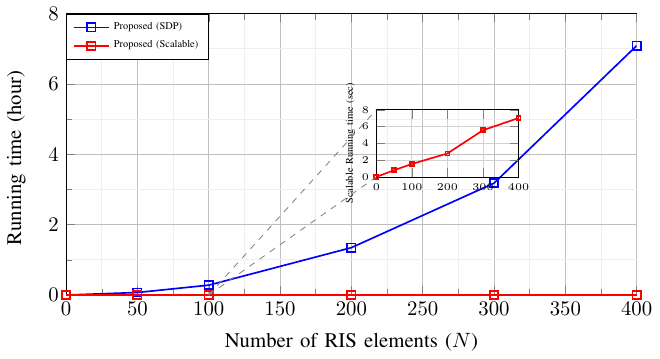}
    \caption{Running time of both proposed algorithms versus the number of the
\gls{RIS} elements ($N$).}
    \label{fig: running time}
    \vspace{-5mm}
\end{figure}

\begin{figure}
    \centering
    \includegraphics[width=0.5\textwidth]{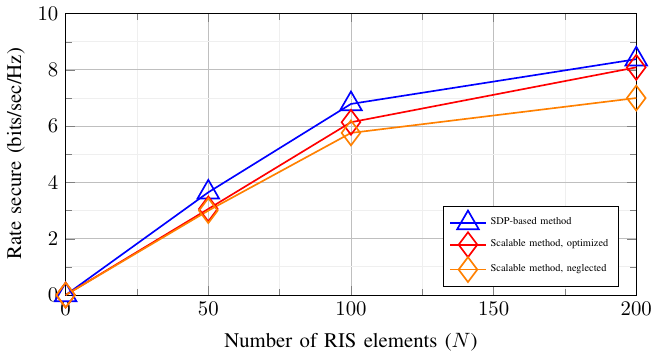}
    \caption{The minimum secure rate over all possible locations $\bp_u\in\Pset_u$, $\bp_e\in\Pset_e$ versus different number of \gls{RIS} elements when $T=40^\circ$C in scenario 1.}
    \label{fig:SR_N}
    \vspace{-5mm}
\end{figure}

\begin{remk}
The MATLAB codes used to generate the simulation results in this section are publicly available online at\\ \href{https://github.com/MohamadrezaDelbari/LC-RIS-temperature}{\textcolor{blue}{https://github.com/MohamadrezaDelbari/LC-RIS-temperature}}.
\end{remk}

\subsection{Simulation Result}
\subsubsection{Convergence and complexity comparison}
The convergence characteristics of Algorithm \ref{alg:cap} are illustrated in Fig.~\ref{fig: rank}. In this context, $N_\mathrm{false}$ represents the count of elements that exceed the phase shift threshold $\omega_{\max}(T)$ (in this figure $T=40^\circ$C), thereby violating constraint C2. As the iterations progress, both the normalized nuclear-to-spectral norm difference $\|\bS\|_* - \|\bS^{(i)}\|_2$ and $N_\mathrm{false}$ diminish toward zero percent. This trend confirms that the rank-one requirement for $\bS$ and the conditions of C2 are successfully satisfied.

Fig. \ref{fig: convergence} displays how Algorithm~\ref{alg:cap 2} converges across 10 different random starting points, tracking the minimum \gls{RS} against the iteration count. Most initializations successfully increase the secrecy rate. Because \eqref{eq: phase thresholding} yields a sub-optimal instead of a global solution to Problem P6, a strictly monotonic increase in \gls{RS} is not guaranteed, but the algorithm reliably converges to a high-quality local optimum (indicated by the red curve) that substantially boosts the overall secrecy rate.

Fig.~\ref{fig: running time} compares the execution times of the two proposed algorithms: the \gls{SDP}-based approach and the scalable method\footnote{The algorithms were implemented in MATLAB R2024a and executed on an Arch Linux system equipped with an AMD Ryzen 9 7950X (16-core) CPU and 64 GB of RAM.}. As illustrated, the scalable method (seconds) requires significantly less computational time than the \gls{SDP} method (hours). As anticipated, the computational complexity of the \gls{SDP} approach scales cubically, approximately $\mathcal{O}(N^3)$, whereas the scalable solution exhibits a linear growth rate of $\mathcal{O}(N)$ approximately. For instance, at $N=400$, the \gls{SDP} approach requires approximately 7 hours to complete, while the scalable approach takes only 7 seconds. Note that these figures represent a single initialization for both algorithms. Our observations indicate that the scalable approach may require dozens of initializations to achieve optimal results; however, this only increases the total runtime to approximately one minute, which is highly practical for real-time applications when $N=400$.

\subsubsection{Performance comparison}

Figure~\ref{fig:SR_N} illustrates the minimum secrecy rate across all potential user and eavesdropper locations for \(T=40\), comparing Algorithm~\ref{alg:cap} and Algorithm~\ref{alg:cap 2} (optimized and neglected benchmark). While the \gls{SDP} method consistently outperforms the scalable approach, this performance gain comes at the expense of higher computational complexity, as demonstrated in Fig.~\ref{fig: running time}. Consequently, the remaining evaluations focus exclusively on Algorithm~\ref{alg:cap 2} due to its superior execution speed when \(N=400\).


\begin{figure*}[ht]
\centering
\begin{subfigure}{0.19\textwidth}
    \caption{Neglected, $T=-20^\circ$C}
    \includegraphics[width=\textwidth,height=0.7\textwidth]{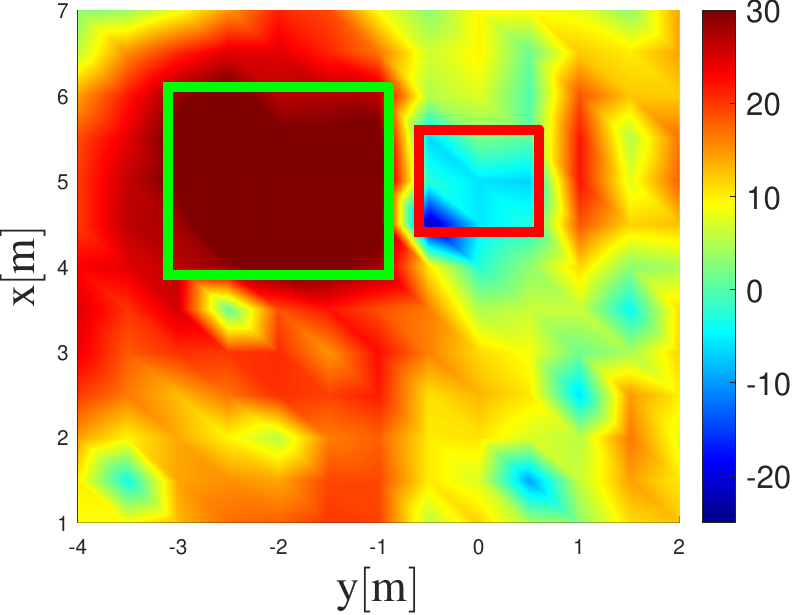}
    \label{fig: T-20_Scalable_changed_U5-2_E50}
\end{subfigure}\hfill
\begin{subfigure}{0.19\textwidth}
    \caption{Neglected, $T=-10^\circ$C}
    \includegraphics[width=\textwidth,height=0.7\textwidth]{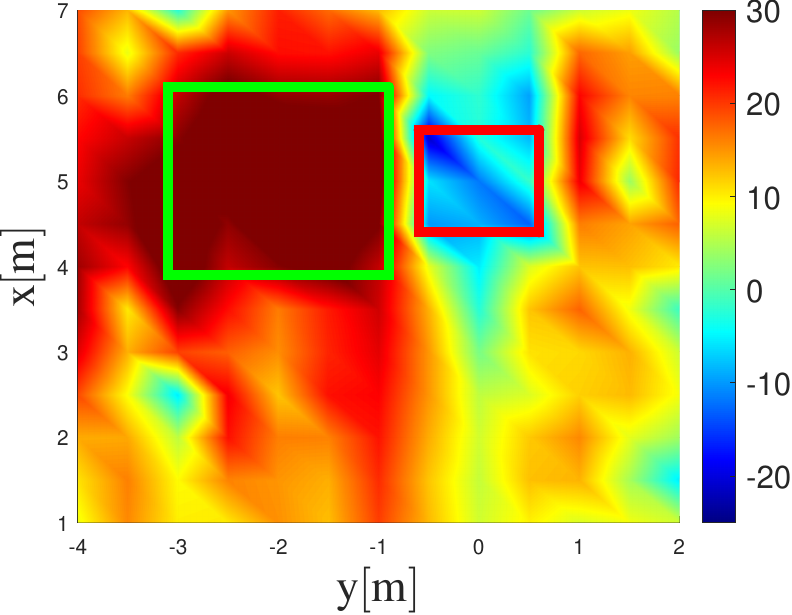}
    \label{fig: T-10_Scalable_changed_U5-2_E50}
\end{subfigure}\hfill
\begin{subfigure}{0.19\textwidth}
    \caption{Neglected, $T=10^\circ$C}
    \includegraphics[width=\textwidth,height=0.7\textwidth]{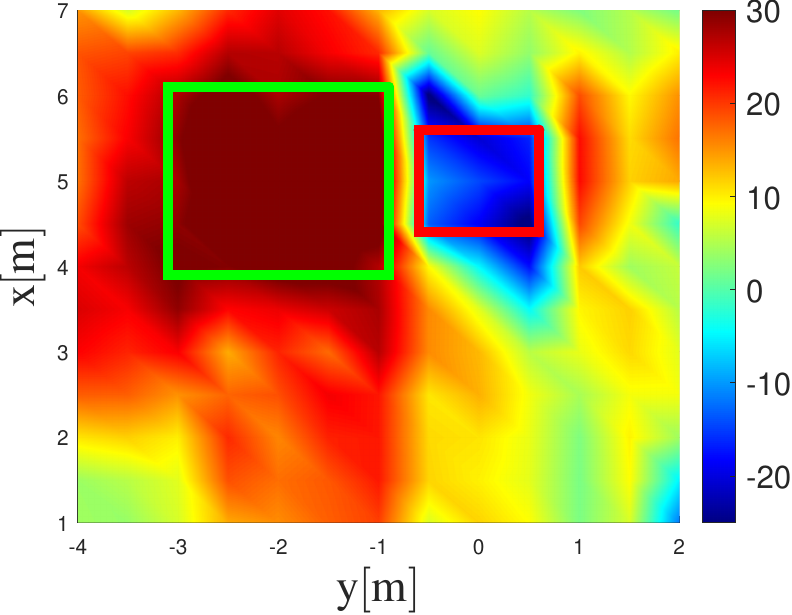}
    \label{fig: T10_Scalable_changed_U5-2_E50}
\end{subfigure}\hfill
\begin{subfigure}{0.19\textwidth}
    \caption{Neglected, $T=30^\circ$C}
    \includegraphics[width=\textwidth,height=0.7\textwidth]{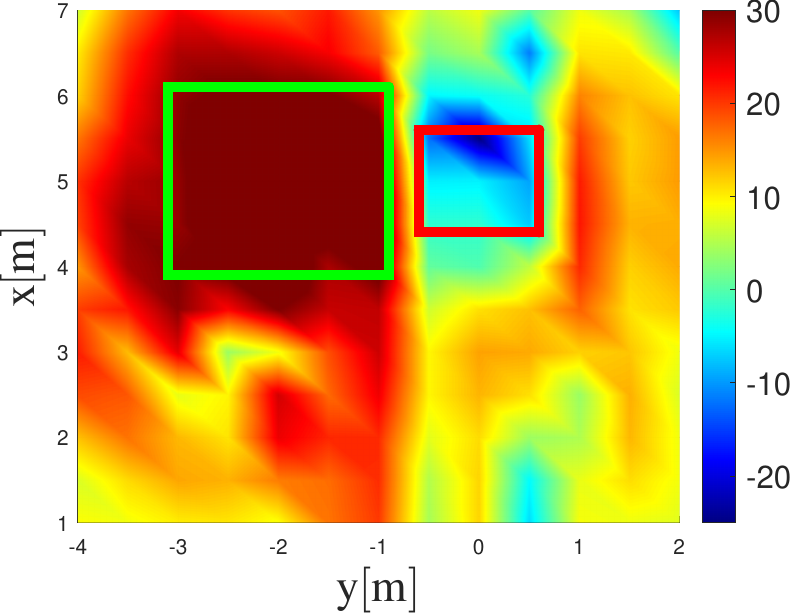}
    \label{fig: T30_Scalable_changed_U5-2_E50}
\end{subfigure}\hfill
\begin{subfigure}{0.19\textwidth}
    \caption{Neglected, $T=40^\circ$C}
    \includegraphics[width=\textwidth,height=0.7\textwidth]{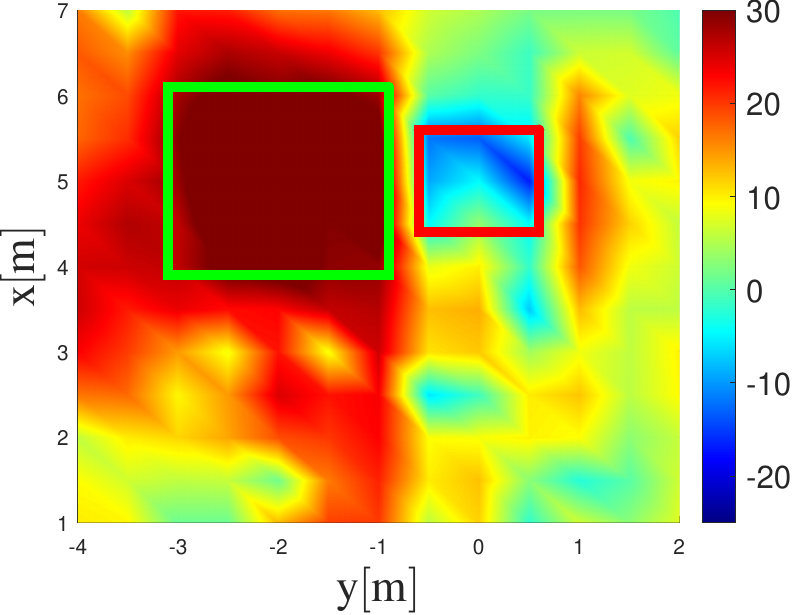}
    \label{fig: T40_Scalable_changed_U5-2_E50}
    \end{subfigure}\hfill
    \vspace{-5mm}
    \newline
    \begin{subfigure}{0.19\textwidth}
    \caption{Optimized, $T=-20^\circ$C}
    \includegraphics[width=\textwidth,height=0.7\textwidth]{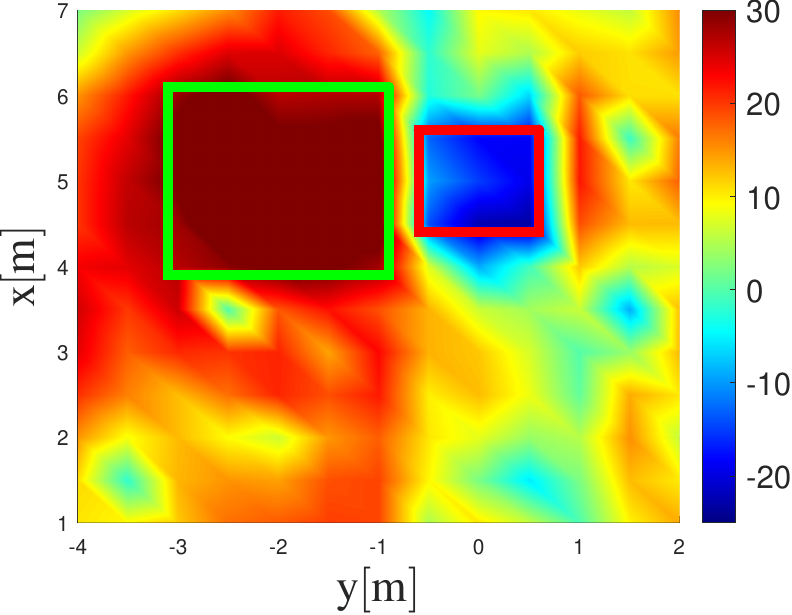}
    \label{fig: T-20_Scalable_Opt_U5-2_E50}
\end{subfigure}\hfill
\begin{subfigure}{0.19\textwidth}
    \caption{Optimized, $T=-10^\circ$C}
    \includegraphics[width=\textwidth,height=0.7\textwidth]{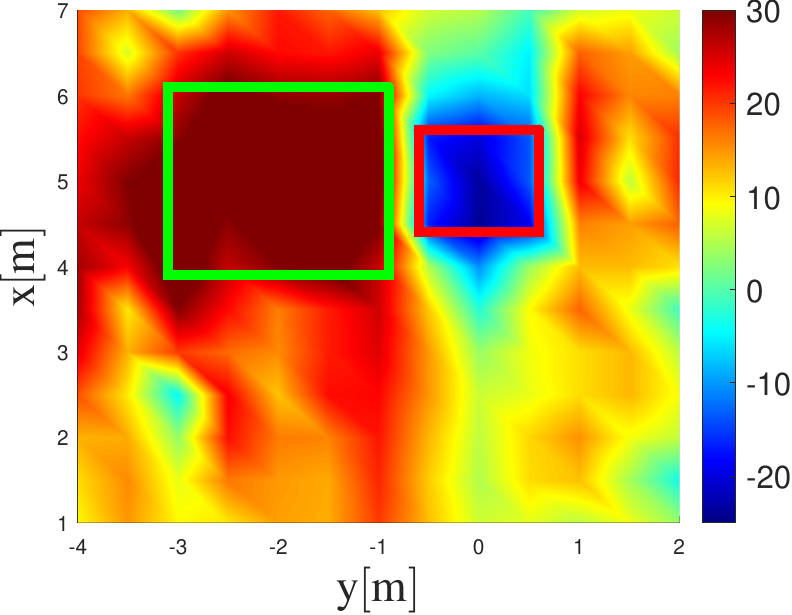}
    \label{fig: T-10_Scalable_Opt_U5-2_E50}
\end{subfigure}\hfill
\begin{subfigure}{0.19\textwidth}
    \caption{Optimized, $T=10^\circ$C}
    \includegraphics[width=\textwidth,height=0.7\textwidth]{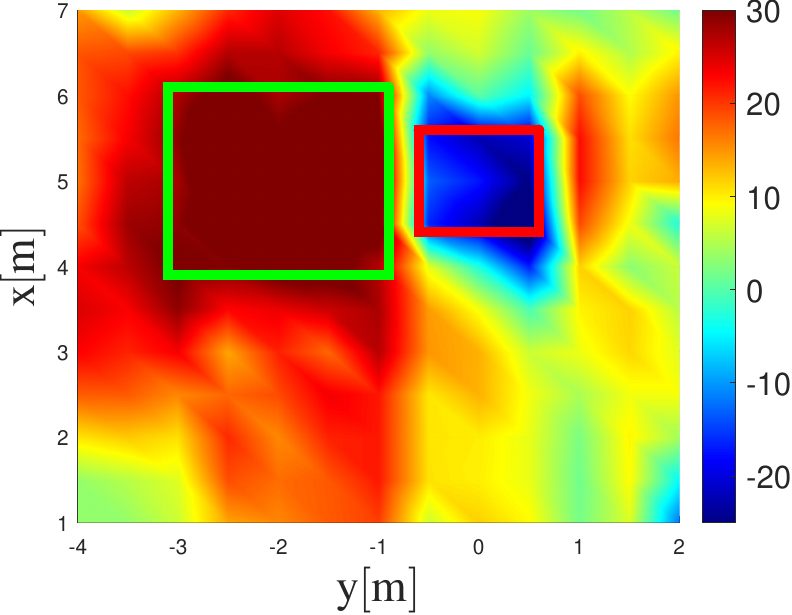}
    \label{fig: T10_Scalable_Opt_U5-2_E50}
\end{subfigure}\hfill
\begin{subfigure}{0.19\textwidth}
    \caption{Optimized, $T=30^\circ$C}
    \includegraphics[width=\textwidth,height=0.7\textwidth]{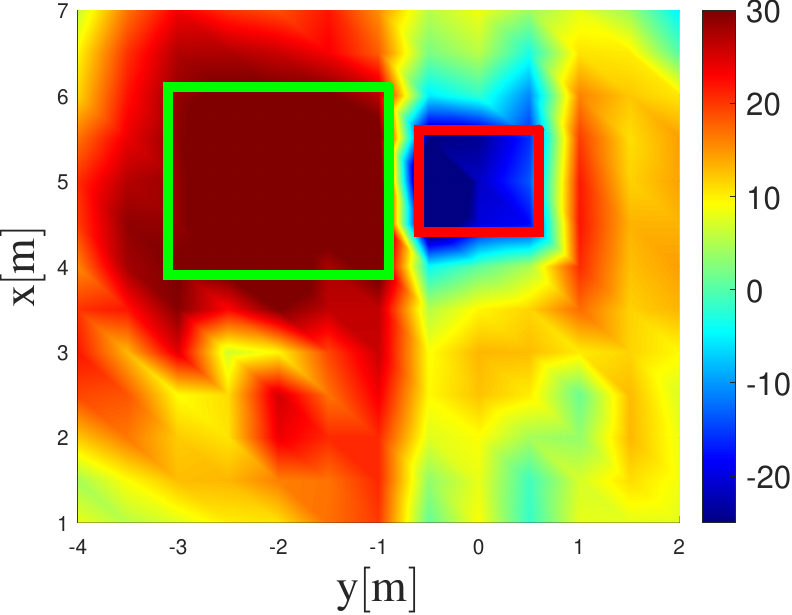}
    \label{fig: T30_Scalable_Opt_U5-2_E50}
\end{subfigure}\hfill
\begin{subfigure}{0.19\textwidth}
    \caption{Optimized, $T=40^\circ$C}
    \includegraphics[width=\textwidth,height=0.7\textwidth]{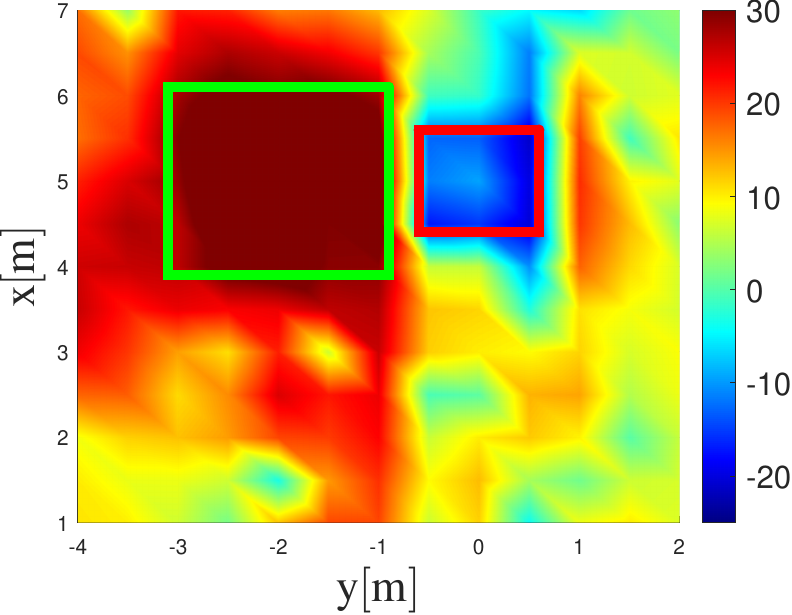}
    \label{fig: T40_Scalable_Opt_U5-2_E50}
\end{subfigure}\hfill
\vspace{-5mm}
\caption{SNR (dB) for two cases in scenario 1. The first row includes results when the \gls{RIS} phase shifts are designed without caring about the temperature, while the bottom row demonstrates results for the case where the \gls{RIS} phase shifts are optimized based on the temperature.}
\label{fig: heat map H}
\vspace{-5mm}
\end{figure*}


\begin{figure*}[ht]
\centering
\begin{subfigure}{0.19\textwidth}
    \caption{Neglected, $T=-20^\circ$C}
    \includegraphics[width=\textwidth,height=0.7\textwidth]{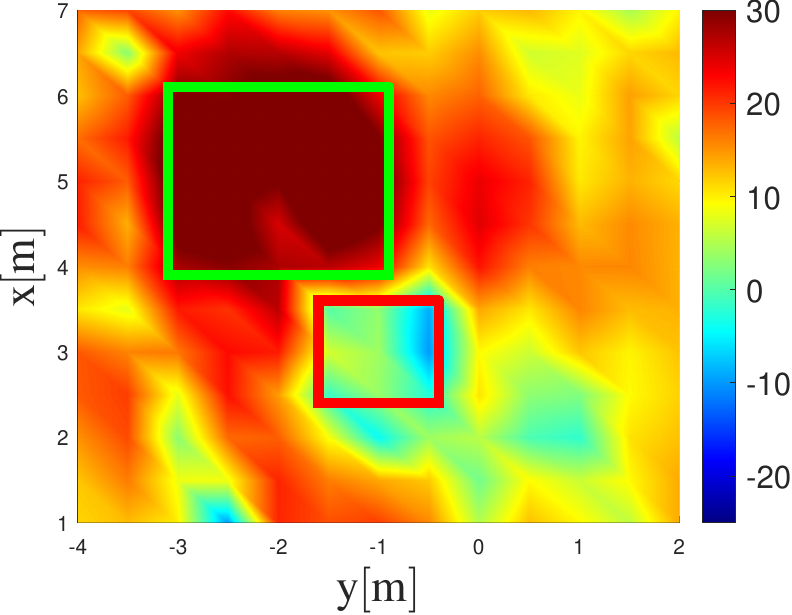}
    \label{fig: T-20_Scalable_changed_U5-2_E3-1}
\end{subfigure}\hfill
\begin{subfigure}{0.19\textwidth}
    \caption{Neglected, $T=-10^\circ$C}
    \includegraphics[width=\textwidth,height=0.7\textwidth]{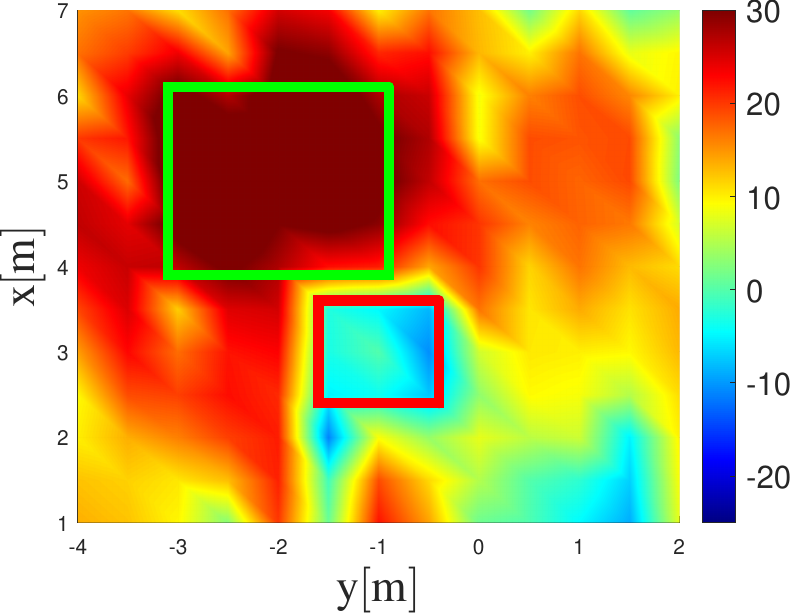}
    \label{fig: T-10_Scalable_changed_U5-2_E3-1}
\end{subfigure}\hfill
\begin{subfigure}{0.19\textwidth}
    \caption{Neglected, $T=10^\circ$C}
    \includegraphics[width=\textwidth,height=0.7\textwidth]{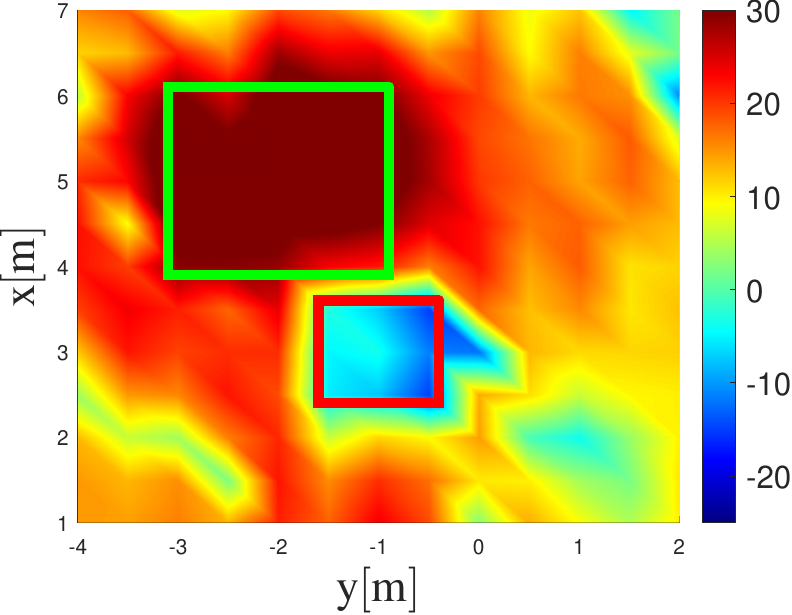}
    \label{fig: T10_Scalable_changed_U5-2_E3-1}
\end{subfigure}\hfill
\begin{subfigure}{0.19\textwidth}
    \caption{Neglected, $T=30^\circ$C}
    \includegraphics[width=\textwidth,height=0.7\textwidth]{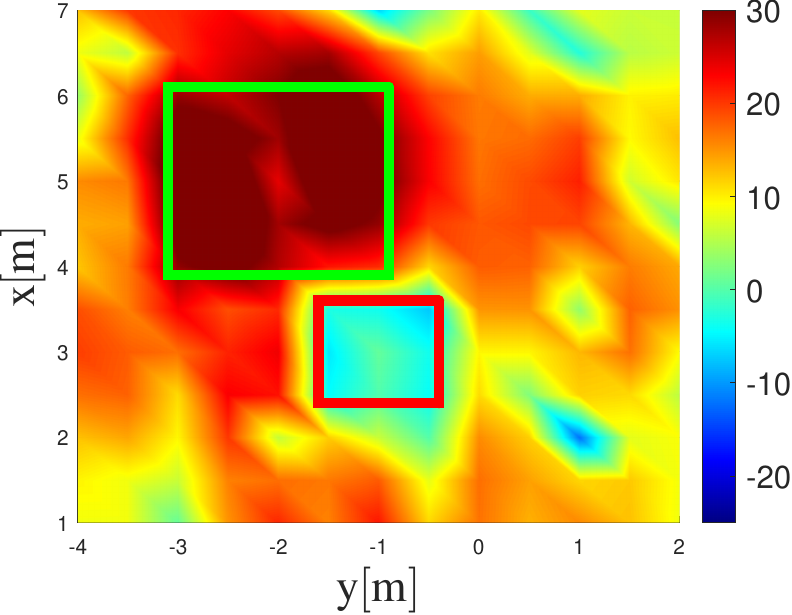}
    \label{fig: T30_Scalable_changed_U5-2_E3-1}
\end{subfigure}\hfill
\begin{subfigure}{0.19\textwidth}
    \caption{Neglected, $T=40^\circ$C}
    \includegraphics[width=\textwidth,height=0.7\textwidth]{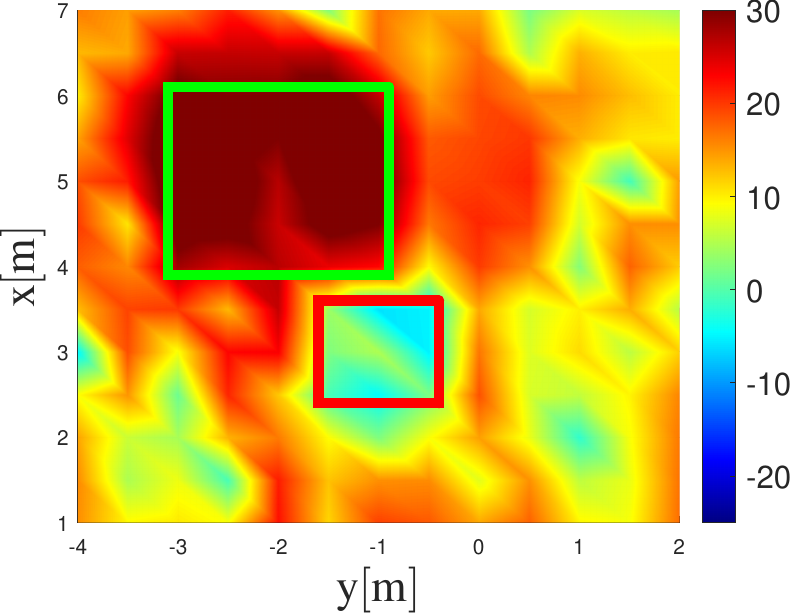}
    \label{fig: T40_Scalable_changed_U5-2_E3-1}
    \end{subfigure}\hfill
    \vspace{-5mm}
    \newline
    \begin{subfigure}{0.19\textwidth}
    \caption{Optimized, $T=-20^\circ$C}
    \includegraphics[width=\textwidth,height=0.7\textwidth]{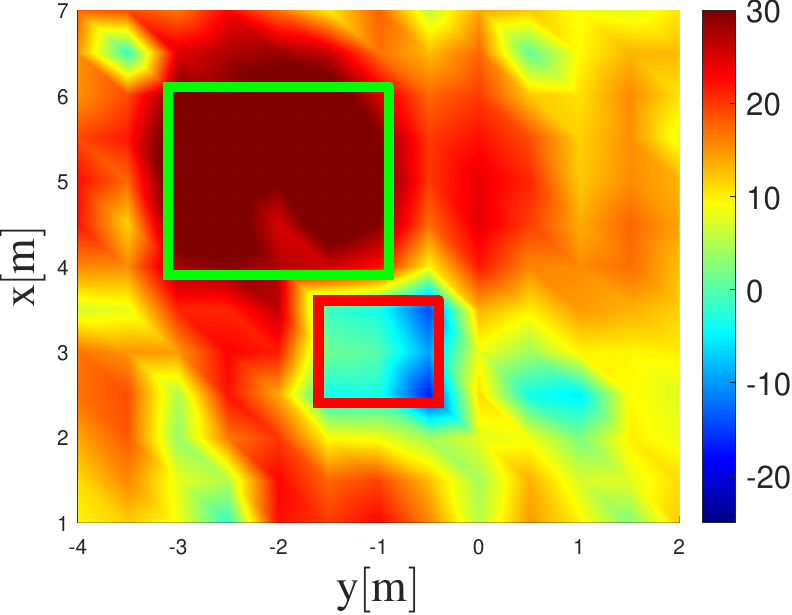}
    \label{fig: T-20_Scalable_Opt_U5-2_E3-1}
\end{subfigure}\hfill
\begin{subfigure}{0.19\textwidth}
    \caption{Optimized, $T=-10^\circ$C}
    \includegraphics[width=\textwidth,height=0.7\textwidth]{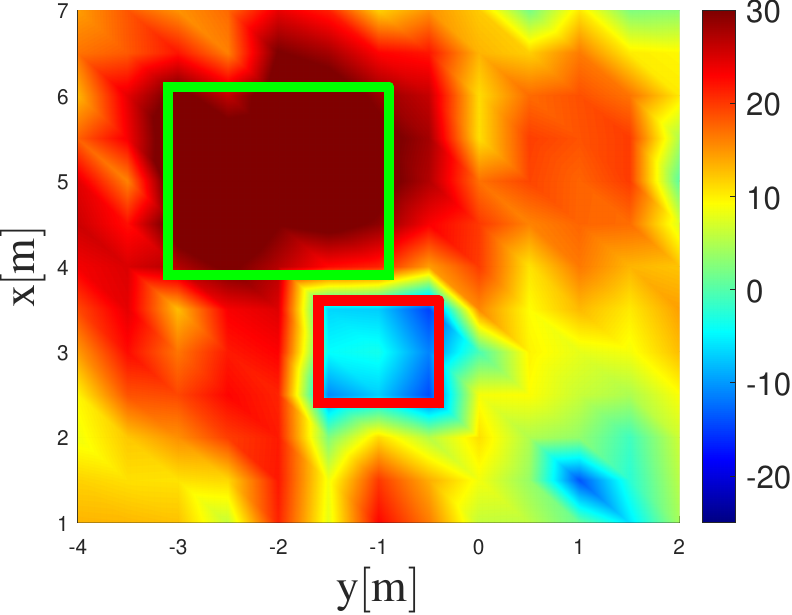}
    \label{fig: T-10_Scalable_Opt_U5-2_E3-1}
\end{subfigure}\hfill
\begin{subfigure}{0.19\textwidth}
    \caption{Optimized, $T=10^\circ$C}
    \includegraphics[width=\textwidth,height=0.7\textwidth]{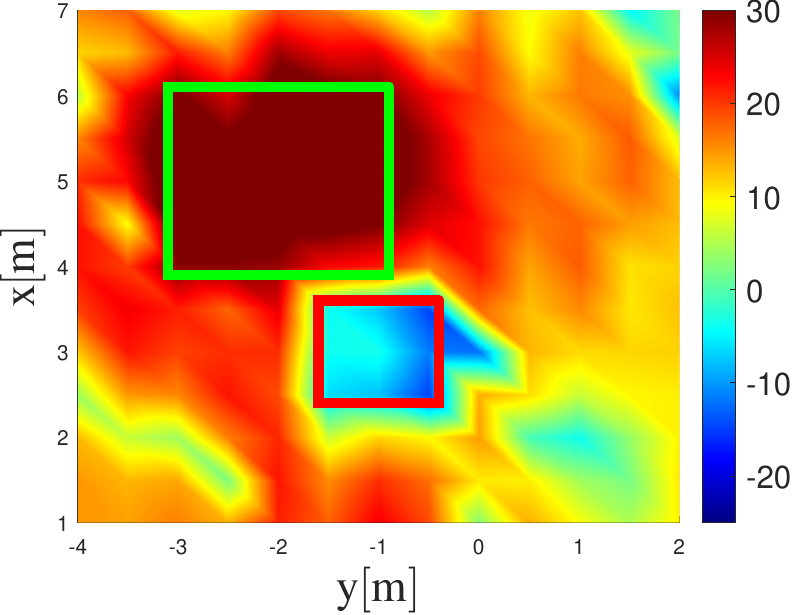}
    \label{fig: T10_Scalable_Opt_U5-2_E3-1}
\end{subfigure}\hfill
\begin{subfigure}{0.19\textwidth}
    \caption{Optimized, $T=30^\circ$C}
    \includegraphics[width=\textwidth,height=0.7\textwidth]{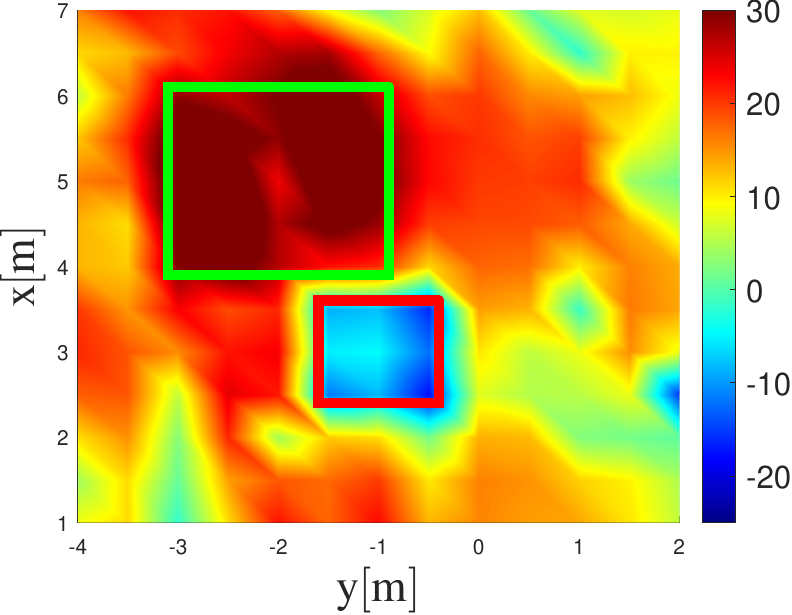}
    \label{fig: T30_Scalable_Opt_U5-2_E3-1}
\end{subfigure}\hfill
\begin{subfigure}{0.19\textwidth}
    \caption{Optimized, $T=40^\circ$C}
    \includegraphics[width=\textwidth,height=0.7\textwidth]{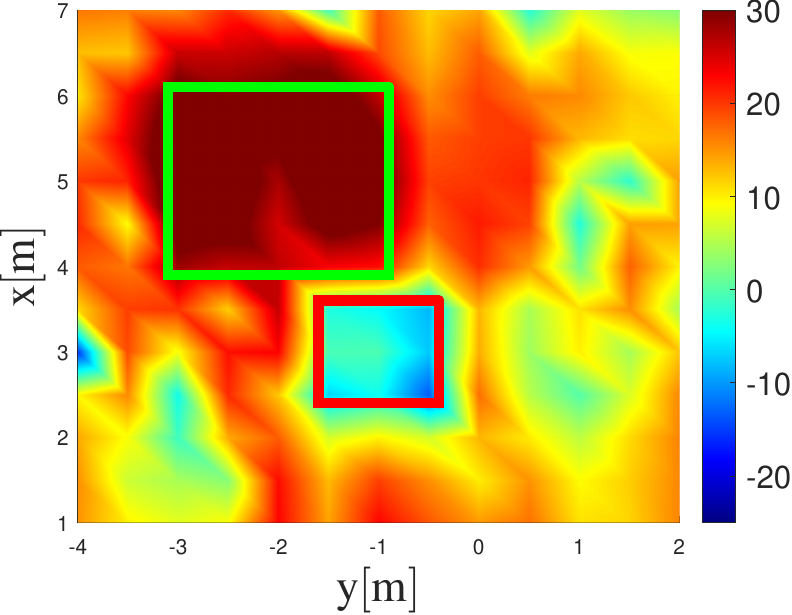}
    \label{fig: T40_Scalable_Opt_U5-2_E3-1}
\end{subfigure}\hfill
\vspace{-5mm}
\caption{SNR (dB) for two cases in scenario 2. The first row includes results when the \gls{RIS} phase shifts are designed without caring about the temperature, while the bottom row demonstrates results for the case where the \gls{RIS} phase shifts are optimized based on the temperature.}
\label{fig: heat map V}
\vspace{-5mm}
\end{figure*}

Figs. \ref{fig: heat map H} and \ref{fig: heat map V} illustrate the averaged received \gls{SNR} (dB) at various locations for Scenarios 1 and 2, respectively, via the proposed scalable algorithm. In both figures, the columns represent different temperatures ranging from $-20^\circ$C to $40^\circ$C, with $10^\circ$C serving as the reference temperature. The first row in each figure depicts the performance when temperature effects are neglected, while the second row shows the performance when \gls{RIS} phase shifts are optimized to account for these thermal impacts. While the received \gls{SNR} for \gls{MU} appears stable if temperature effects are ignored, the signal strength within the \gls{ME}’s vicinity increases considerably under those conditions, compromising security. Furthermore, as the operating temperature deviates from the reference temperature, the negative impact of neglecting thermal variations becomes more pronounced. In general, maximizing the secure rate is more straightforward in Scenario 1 (Fig. \ref{fig: heat map H}) because the \gls{MU} area and the \gls{ME} are located at different angles. In contrast, Scenario 2 (Fig. \ref{fig: heat map V}) presents a greater challenge; however, a high secure rate is still achievable due to the additional degrees of freedom provided by operating in the \gls{NF} regime.

\begin{figure}
    \centering
    \includegraphics[width=0.5\textwidth]{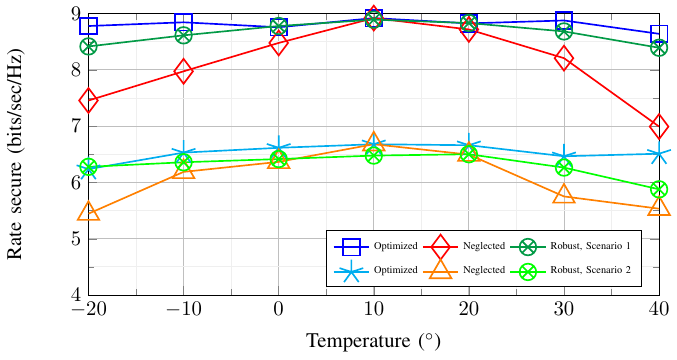}
    \caption{The minimum secrecy rate (bits/sec/Hz) over all possible locations $\bp_u\in\Pset_u$ and $\bp_e\in\Pset_e$ versus temperature.}
    \label{fig: SR_T}
    \vspace{-5mm}
\end{figure}

In Fig.~\ref{fig: SR_T}, we plot the secure rate versus temperature for both scenarios, comparing cases where the temperature impact is either accounted for or neglected. As illustrated in both scenarios, neglecting temperature effects causes the secure rate to drop as the operating temperature deviates from the reference ($T_r=10^\circ$C). Conversely, the secure rate remains relatively constant across the entire range when the \gls{LC}-\gls{RIS} phase shifts are optimized according to the actual temperature. Overall, scenario 1 achieves a higher secure rate than scenario 2 because the \gls{ME} and the \gls{MU} regions are not aligned in the same direction relative to the \gls{RIS}. In addition, we plotted the robust algorithm result derived in \eqref{eq:LSE_robust}. It has approximately a constant secure rate across all temperatures, although the exact temperature data is not available.


\section{Conclusion}
\label{sec: Conclusion}
In this paper, we first analyzed the impact of temperature variations on the phase shifts of \gls{LC}-\glspl{RIS}. To mitigate the phase shift range limitations caused by rising temperatures, we developed two algorithms aimed at maximizing the secrecy rate. Furthermore, we proposed a robust design that optimizes the \gls{LC}-\gls{RIS} phase shifts without requiring prior knowledge of the temperature. Simulation results demonstrate the critical importance of accounting for temperature variations in \gls{LC}-\gls{RIS} phase shift design. As future work, one may explore signal processing and machine learning approaches to estimate the temperature from the feedback signal by the user, thereby optimizing phase shifts for enhanced security.

\bibliographystyle{IEEEtran}
\bibliography{References}
\end{document}

%% file: preamble.tex
\usepackage{amsfonts,amsmath,amsthm,bm}
\usepackage{cite}
\usepackage{amsmath,amssymb,amsfonts}
\usepackage{algorithmic}
\usepackage{graphicx}
\usepackage{textcomp}
\usepackage{xcolor}
\usepackage[T1]{fontenc}
\usepackage{comment}
\usepackage{eucal}
\usepackage{algorithm}
\usepackage{algorithmic}
\usepackage[acronyms,nonumberlist,nopostdot,nomain,nogroupskip]{glossaries}
\usepackage[font=footnotesize]{caption}
\usepackage{subcaption}

\usepackage{booktabs,flushend}
\usepackage{stfloats}
\usepackage{microtype}

\newtheorem{lem}{Lemma}

\newtheorem{remk}{Remark}

\newcommand{\defeq}{\triangleq}
\def\bOmega{\boldsymbol{\Omega}}

\newcommand{\e}{\mathsf{e}}
\newcommand{\jj}{\mathsf{j}}
\newcommand{\Herm}{\mathsf{H}}
\newcommand{\Trans}{\mathsf{T}}
\newcommand{\x}{\mathsf{x}}
\newcommand{\y}{\mathsf{y}}
\newcommand{\z}{\mathsf{z}}
\newcommand{\bA}{\mathbf{A}}
\newcommand{\bx}{\mathbf{x}}

\newcommand{\bs}{\mathbf{s}}
\newcommand{\bS}{\mathbf{S}}
\newcommand{\bn}{\mathbf{n}}
\newcommand{\bH}{\mathbf{H}}
\newcommand{\ba}{\mathbf{a}}

\newcommand{\bu}{\mathbf{u}}

\newcommand{\bI}{\mathbf{I}}

\newcommand{\bh}{\mathbf{h}}

\newcommand{\bq}{\mathbf{q}}
\newcommand{\bp}{\mathbf{p}}

\newcommand{\kk}{\kappa}

\newcommand{\bSigma}{\boldsymbol{\Sigma}}
\newcommand{\bGamma}{\boldsymbol{\Gamma}}
\newcommand{\bPhi}{\boldsymbol{\Phi}}
\newcommand{\bbeta}{\boldsymbol{\beta}}
\newcommand{\bmu}{\boldsymbol{\mu}}

\newcommand{\blambda}{\boldsymbol{\lambda}}

\newcommand{\Ex}{\mathbb{E}}

\newcommand{\diag}{\mathrm{diag}}
\newcommand{\real}{\mathrm{Re}}
\newcommand{\imag}{\mathrm{Im}}
\newcommand{\tr}{\mathrm{tr}}
\newcommand{\rank}{\mathrm{rank}}
\newcommand{\dd}{\mathrm{d}}

\newcommand{\tx}{\mathrm{tx}}

\newcommand{\SNR}{\mathrm{SNR}}
\newcommand{\RS}{\mathrm{SR}}
\newcommand{\RIS}{\mathrm{RIS}}

\def\bomega{\boldsymbol{\omega}}

\def\bzero{\boldsymbol{0}}
\def\bone{\boldsymbol{1}}
\def\Cset{\mathbb{C}}

\def\Rset{\mathbb{R}}

\def\LOS{\mathrm{LOS}}

\def\tmax{\mathrm{max}}
\def\tx{\mathrm{tx}}
\def\rx{\mathrm{rx}}
\def\BS{\mathrm{BS}}
\def\RIS{\mathrm{RIS}}

\def\SNR{\mathrm{SNR}}
\def\eff{\mathrm{eff}}

\def\sCN{\mathcal{CN}}

\def\Pset{\mathcal{P}}
\def\Tset{\mathcal{T}}
\def\bigO{\mathcal{O}}
\usepackage{xcolor}

\input{acronyms}

%% file: acronyms.tex
\newacronym{RIS}{RIS}{reconfigurable intelligent surface}
\newacronym{QoS}{QoS}{quality of service}
\newacronym{LC}{LC}{liquid crystal}
\newacronym{SNR}{SNR}{signal to noise ratio}
\newacronym{TDMA}{TDMA}{time-division multiple-access}
\newacronym{BS}{BS}{base station}
\newacronym{MU}{MU}{mobile user}
\newacronym{ME}{ME}{mobile eavesdropper}
\newacronym{NF}{NF}{near-field}
\newacronym{Tx}{Tx}{transmitter}
\newacronym{Rx}{Rx}{receiver}
\newacronym{AWGN}{AWGN}{additive white Gaussian noise}
\newacronym{w.r.t.}{w.r.t.}{with respect to}
\newacronym{RDE}{RDE}{Reaction-Diffusion Equation}
\newacronym{PDE}{PDE}{partial differential equation}
\newacronym{UPA}{UPA}{uniform planar array}
\newacronym{AO}{AO}{alternative optimization}
\newacronym{SOCP}{SOCP}{second-order cone programming}
\newacronym{AoD}{AoD}{angle of departure}
\newacronym{LOS}{LOS}{line of sight}
\newacronym{nLOS}{NLOS}{non-LOS}
\newacronym{MIMO}{MIMO}{multiple-input multiple-output}
\newacronym{RS}{SR}{secure rate}
\newacronym{SDP}{SDP}{semi-definite programming}
\newacronym{CSI}{CSI}{channel state information}

\newacronym{PIN}{PIN}{positive-intrinsic-negative}
\newacronym{RF}{RF}{radio frequency}
\newacronym{MEMS}{MEMS}{micro-electro-mechanical system}
\newacronym{mmWave}{mmWave}{millimeter-wave}
\newacronym{OFDM}{OFDM}{orthogonal frequency division multiplexing}
\newacronym{LSE}{LSE}{log-sum-exp}
\newacronym{PDF}{PDF}{probability distribution function}